\documentclass[12pt]{article}
\usepackage{amsmath}
\usepackage{amssymb}
\usepackage{amsthm}
\usepackage{mathrsfs}
\usepackage{geometry}
\usepackage{diagbox}
\usepackage{subfigure}
\usepackage[colorlinks,linkcolor=blue,citecolor=blue]{hyperref}
\usepackage[toc,page]{appendix}
\newtheorem{thm}{Theorem}[section]

\newtheorem{pro}{Proposition}[section]
\newtheorem{lem}{Lemma}[section]

\theoremstyle{definition}
\newtheorem{Ap}{Assumption}

\newtheorem{remark}{Remark}[section]
\usepackage{graphicx}
\usepackage{epstopdf}
\usepackage{enumerate}
\usepackage{color}
\usepackage{accents}

\usepackage[all]{xy}
\allowdisplaybreaks

\newcommand{\mfv}{\mathfrak{v}}

\newcommand{\mfC}{\mathfrak{C}}

\newcommand{\bbB}{\boldsymbol{B}}

\newcommand{\bbw}{\boldsymbol{w}}
\newcommand{\bbx}{\boldsymbol{x}}
\newcommand{\bby}{\boldsymbol{y}}
\newcommand{\bbz}{\boldsymbol{z}}

\newcommand{\bbmu}{\boldsymbol{\mu}}

\newcommand{\bbGa}{\boldsymbol{\Gamma}}

\newcommand{\bbSig}{\boldsymbol{\Sigma}}

\newcommand{\mbE}{\mathbb{E}}
\newcommand{\mbP}{\mathbb{P}}

\newcommand{\mbN}{\mathbb{N}}

\newcommand{\mbR}{\mathbb{R}}

\newcommand{\mbI}{\mathbb{I}}

\newcommand{\mbQ}{\mathbb{Q}}

\newcommand{\mrO}{O}
\newcommand{\mro}{o}

\newcommand{\mcN}{\mathcal{N}}

\newcommand{\mcF}{\mathcal{F}}
\newcommand{\mcS}{\mathcal{S}}
\newcommand{\mcT}{\mathcal{T}}
\newcommand{\mcM}{\mathcal{M}}
\newcommand{\mcL}{\mathcal{L}}

\newcommand{\mcB}{\mathcal{B}}
\newcommand{\mcD}{\mathcal{D}}

\newcommand{\fM}{\mathcal{M}}

\newcommand{\fy}{\mathfrak{y}}

\newcommand{\tr}{\operatorname{Tr}}
\newcommand{\diag}{\operatorname{diag}}

\newcommand{\Var}{\operatorname{Var}}

\newcommand{\sign}{\operatorname{sign}}

\newcommand{\RomanNumeralCaps}[1]
    {\MakeUppercase{\romannumeral #1}}

\title{Deviation Tests for a High-dimensional Mean}
\author{Zengjing Chen\thanks{zjchen@sdu.edu.cn},\quad Ruihan Liu\thanks{rhliu@connect.hku.hk},\quad Jianfeng Yao\thanks{jeffyao@cuhk.edu.cn}\\
\small{School of Mathematics, Shandong University$^*$}\\
\small{School of Computing and Data Science, The University of Hong Kong$^\dagger$}\\
\small{School of Data Science, The Chinese University of Hong Kong (Shenzhen)$^\ddagger$}}
\date{}
\begin{document}
\maketitle
\begin{abstract}
    This paper investigates testing for deviation of a high-dimensional mean vector $\boldsymbol{\mu}$. In contrast to the standard one-sample significance test of the form: $H_0^\texttt{e} : \boldsymbol{\mu} = \boldsymbol{\mu}_0$ versus $H_1^\texttt{e} : \boldsymbol{\mu} \neq \boldsymbol{\mu}_0$, we focus on testing the deviation $H_0 : \|\boldsymbol{\mu} - \boldsymbol{\mu}_0\|_2 \ge d_0$ versus $H_1 : \|\boldsymbol{\mu} - \boldsymbol{\mu}_0\|_2 < d_0$ for a prespecified length  $d_0 > 0$. Constructing a valid test statistic for this problem is technically nontrivial. By applying the concept of positive and negative feedback processes from control theory, we propose a test statistic based on a two-armed bandit (TAB) process. These deviation tests are also extended to the two-sample setting. Simulation experiments confirm a good performance of the tests in finite samples. Finally, a real data analysis demonstrates the practical significance of the proposed deviation tests.
\end{abstract}
\noindent{\bf Keywords:} High-dimensional significance test; High-dimensional test for deviation of mean;    Two-armed bandit (TAB) process.

\section{Introduction}\label{Sec of Introduction}
The one-sample or two-sample significance test for a population mean   is a fundamental problem in high-dimensional statistics. Specifically, given a random sample $\{\bbx_t \in \mathbb{R}^n : t=1,\dots,T\}$ with the population mean vector $\bbmu$ and the population covariance matrix $\bbSig$, the one-sample significance test considers the hypotheses
\begin{equation}
	H_0^\texttt{e} : \bbmu = \bbmu_0 \quad \text{versus} \quad H_1^\texttt{e} : \bbmu \neq \bbmu_0,\label{Eq of equality test}
\end{equation}
where $\bbmu_0 \in \mathbb{R}^n$ is a given reference mean vector. The   hypothesis test \eqref{Eq of equality test} is classical and has been   comprehensively studied in the   literature.  

The first well-known study of \eqref{Eq of equality test} was the Hotelling’s $T^2$ test \cite{hotelling1931generalization}, which leverages the consist estimator of the Mahalanobi distance $(\bbmu-\bbmu_0)'\bbSig^{-1}(\bbmu-\bbmu_0)$. However, when the data is high-dimensional, that is, the data dimension $n$ is large in comparison to the  sample size $T$,   the classic statistic in \cite{hotelling1931generalization} is no more applicable. \cite{bai1996effect} developed a modified Hotelling’s $T^2$ test assuming $n/T\to(0,1)$. 
This work was later extended in   \cite{chen2010two} and \cite{chen2011regularized}.   

Another type of statistic for \eqref{Eq of equality test} is based on a  consistent estimator of $\|\bbmu-\bbmu_0\|_{\infty}$ \cite{tony2014two,xu2016adaptive,chang2017simulation}. \cite{tony2014two} demonstrates that such maximum-type statistics are typically more powerful than distance-type   statistics mentioned previously under sparse alternatives. On the other hand, empirical sizes of the extreme-type statistic are often unstable due to the slow asymptotic convergence  to their limiting distributions. Hence, \cite{lopes2011more,li2021linear,JMLR:v23:20-1103} and \cite{liu2024projection} considered a projection test for sparse means. The idea here  is to project the high-dimensional samples $\{\bbx_t\}$  along certain optimal direction, so \eqref{Eq of equality test} becomes the univariate mean test and the classic Hotelling’s $T^2$ statistic or its high-dimensional extensions can be  applied.

From a real-world application perspective though, the significance test \eqref{Eq of equality test} has a few drawback. First, the exact equality $\bbmu = \bbmu_0$ is rarely plausible in practice, and this is particularly true in high dimensions, so that it is often rejected when analysing real data sets, see examples in   \cite{chen2011regularized,liu2024projection} and \cite{jiang2025testing}.   However, rejecting the equality in $H_0^\texttt{e}$ does not provide enough information regarding the magnitude of the difference between $\bbmu$ and $\bbmu_0$. Such information is often  critical though in a real data analysis. For example, when comparing a new medicine to its biosimilar, researchers often aim to prove that the new medicine  is clinically ``difference'' of its biosimilars.  Let $\bbmu_0 \in \mathbb{R}^n$ be a vector representing the clinical assessments of a given biosimilar  and $\{\bbx_t \in \mathbb{R}^n : t=1,\dots,T\}$ be the assessments of the new medicine  from $T$ independent clinical trials. We treat $\bbx_t$ as random vectors with population mean $\bbmu \in \mathbb{R}^n$. To verify the efficacy of the new medicine, we expect $\bbmu$ to be sufficiently different of the biosimilar benchmark $\bbmu_0$, that is  $\|\bbmu - \bbmu_0\|_2 > d_0$ for a threshold  $d_0 > 0$. Therefore, we consider the following \emph{deviation test}:
\begin{equation}
	H_0 : \|\bbmu - \bbmu_0\|_2 > d_0 \quad \text{versus} \quad H_1 : \|\bbmu - \bbmu_0\|_2 \leq d_0. 
	\label{Eq of deviation test}
\end{equation}
Compared to the significance test, \eqref{Eq of deviation test} provides more concrete information about the difference between $\bbmu$ and $\bbmu_0$, making it more appropriate for testing a significant difference. Rejecting $H_0$ implies that the new medicine  is clinically similar to the benchmark biosimilars (with a difference smaller than $d_0$ in $L_2$ norm), whereas accepting $H_0$ indicates they are   clinically different.

It turns out that  constructing an appropriate test statistic for the deviation test \eqref{Eq of deviation test} is technically nontrivial. In the univariate case, Hotelling’s (Student's) $T^2$ test can be adapted to the one-sided test \eqref{Eq of deviation test} as its  variance is independent of the population mean.  But this independence breaks down in high-dimensional settings, precisely because  the asymptotic  variance of any consistent estimator for the population mean $\bbmu$ depends on $\bbmu$ whose value is   not unique under the null in \eqref{Eq of deviation test}. Another notable difficulty is that   the parameter spaces of \eqref{Eq of deviation test} under $H_0$ and $H_1$ share the same dimension, so the classic likelihood ratio test  cannot be applied here. 

Remarkably,  based on a strategic-driven two-armed bandit (TAB) process from the theory of reinforcement learning  \cite{sutton1998reinforcement}, \cite{chen2023strategic} proposed a novel sequential framework for constructing a test statistic for \eqref{Eq of deviation test} in the univariate case. Given the univariate random samples $\{x_t \in \mathbb{R} : t=1,\dots,T\}$, the test statistic $\{\hat{b}_{t} : t=1,\dots,T\}$ is constructed sequentially:
\begin{align}
	\hat{b}_{t} =\left\{\begin{array}{ll}
		h(x_1,\theta_1), & t=1, \\
		\hat{b}_{t-1} + h(x_t,\theta_t), & t=2,\dots,T,
	\end{array}\right.\label{Eq of TAB construction 1}
\end{align}
where $h(\cdot,\cdot)$ is a well-chosen innovation function with specially designed strategies $(\theta_t)$. Basically, $h(\cdot,\cdot)$ can be viewed as a slot machine with two arms, $h(\cdot,\text{L})$ (left arm) and $h(\cdot,\text{R})$ (right arm). Based on the value of $\hat{b}_{t-1}$ from the previous  step $t-1$, $\theta_t\in\{\text{L,R}\}$ decides the choice of arm to lead the test statistic $\hat{b}_{t}$ exhibits different asymptotic properties under $H_0$ and $H_1$.

Different with the classic TAB algorithms for strategies $(\theta_t)$ (e.g., \cite{gittins1979bandit,whittle1988restless,perchet2013multi}), which select arms for maximizing $\mbE[\hat{b}_t]$ as $t\to\infty$, the strategies in \cite{chen2023strategic} leverages the idea of positive/negative feedback processes from control theory. Precisely, the specially designed strategies $\theta_t$ drives the statistic $\hat{b}_{t}$ follows a negative feedback process under $H_0$ (asymptotically concentrating around 0) and a positive feedback process under $H_1$ (asymptotically diverging to infinity), which is the key for solving \eqref{Eq of deviation test} in the univariate case. Particularly, the asymptotic distribution of $\hat{b}_{t}$ satisfies a ``\emph{strategic central limit theorem}'' from the nonlinear probability theory (see \cite{peng2019nonlinear}), which is more concentrated around 0 under $H_0$ and   diverging under $H_1$ faster  than the standard $t$ or $T^2$-statistic, thereby enhancing the testing power.


In this article, we construct a strategic-driven TAB-based statistic for the deviation test \eqref{Eq of deviation test} in high-dimensional settings, that is, the data dimension $n$ and the sample size $T$ tend to infinity proportionally. Due to the ``curse of dimensionality,'' directly extending hypothesis tests from one-dimensional to high-dimensional cases is difficult. While one might address the test  \eqref{Eq of deviation test} using a multiple testing procedure like \cite{benjamini1995controlling}  to control the False Discovery Rate (FDR) for the  test \eqref{Eq of equality test} (see \cite{sun2007oracle,wang2022false,zhang2022covariate,ignatiadis2024values}), adapting these procedures for the test \eqref{Eq of deviation test} is not straightforward. Particularly, the non-null distance $d_0$ in \eqref{Eq of deviation test} is a global distance that cannot be easily distributed across $n$ dimensions as required in a multiple testing procedure.

Technically, while pursuing the sequential framework from  \cite{chen2023strategic}, constructing a proper innovation function $h(\cdot,\cdot)$ in high dimensions presents new challenges. Notably in the univariate case, the statistic $\hat{b}_{t}$ in \eqref{Eq of TAB construction 1} is independent of the two arms of the innovation $h(x_t,\text{L})$ and $h(x_t,\text{R})$ at each step $t$. However in high dimensions, in order to ensure the statistic performs as a negative/positive feedback process under $H_0$ and $H_1$, respectively,  our proposal of two-armed innovation function relies on both $\bbx_t$ and prior samples $\bbx_s$ (where $s < t$). This creates a high correlation between the statistic and the innovation, significantly complicating a theoretical analysis.

Notably, these high correlations lead a totally different asymptotic behavior of our proposed statistic for high-dimensional deviation test \eqref{Eq of deviation test} compared with the univariate case in \cite{chen2023strategic}. This new distribution governs by a degenerate two dimensional stochastic differential equation (SDE), while \cite{chen2023strategic}'s results rely on a simple one dimensional SDE studied in \cite{mel1979strong}. For distinction, we call this new distribution by \emph{dependent bandit} distribution. Essentially, understanding the properties of this two dimensional SDE is the key to analyze the asymptotic behavior of our proposed statistic. Thus, we further comprehensively study various properties of this SDE as below:
\begin{itemize}
	\item The existence and uniqueness of the weak solution;
	\item The corresponding backward Kolmogorov equation.
\end{itemize}

The rest of this paper is organized as follows: we first study the deviation test \eqref{Eq of deviation test} for one-sample case in \S\ref{Sec of one sample}. Next, we extend the test \eqref{Eq of deviation test} to the two-sample case in \S\ref{Sec of two sample}. Numerical experiments are conducted in \S\ref{Sec of numerical experiment} to evaluate the finite-sample performance of proposed tests. Finally, an empirical study is provided in \S\ref{Sec of empirical}. All proofs of our main results are postponed to the Appendix.

We conclude this section by listing useful notations:

\begin{enumerate}
    \item For two sequences $\{a_n\}$ and $\{b_n\}$, $a_n \le O(b_n)$ implies $a_n \le C b_n$ for some constant $C > 0$ as $n$ becomes sufficiently large. The equality $a_n = O(b_n)$ means $a_n \le O(b_n)$ and $b_n \le O(a_n)$. And $|a_n| = o(|b_n|)$ means $\lim_{n\to\infty}|a_n/b_n|=0$.
    \item Given a matrix $\mathbf{A} = [A_{i,j}]_{n \times n}$, $\text{Tr}(\mathbf{A}) = \sum_{i=1}^n A_{i,i}$ and $\mathbf{A}'$ denotes the transpose. $\|\mathbf{A}\|$ denotes the spectral norm.
    \item $\Vert\cdot\Vert_2,\Vert\cdot\Vert_1$ and $\Vert\cdot\Vert_0$ represent the $L^2,L^1$ and $L^0$ (number of nonzero entries) norms of vectors, respectively.
    \item The symbol $\overset{\mbP}{\longrightarrow}$ denotes the convergence in probability.
    \item For sequences of random variables $\{X_n\}$ and $\{Y_n\}$, $X_n \overset{d}{\sim} Y_n$ as $n \to \infty$ means that for any $x \ge 0$, $\lim_{n \to \infty} |\mathbb{P}(|X_n| \le x) - \mathbb{P}(|Y_n| \le x)| = 0$.
\end{enumerate}

\section{One-sample case}\label{Sec of one sample}
In this section, we consider the high-dimensinoal derivation test \eqref{Eq of deviation test} under the one-sample case. Since the reference mean $\bbmu_0$ is given in practice, without loss of generality, we assume $\bbmu_0=\boldsymbol{0}$ in \eqref{Eq of deviation test}. First, we construct the statistic for \eqref{Eq of deviation test} in \S\ref{sec of norm test} and discuss the corresponding rationale in \S\ref{sec of Rationale}, the asymptotic distribution of the proposed statistic for \eqref{Eq of deviation test} is given in \S\ref{sec of asymptotic distribution}. Finally, we discuss the asymptotic size and power of the proposed statistic in \S\ref{Sec of size and power}.

\subsection{Test statistic for \texorpdfstring{\eqref{Eq of deviation test}}{(2)}}\label{sec of norm test}
With $\bbmu_0=\boldsymbol{0}$, the test \eqref{Eq of deviation test} becomes
\begin{align}
    H_0:\Vert\bbmu\Vert_2> d_0\quad\text{versus}\quad H_1:\Vert\bbmu\Vert_2\leq d_0.\label{Eq of high dimensional mean test}
\end{align}
Moreover, we adopt the following asymptotic regime.
\begin{Ap}\label{Ap of high dimensionality}
    Both the sample size $T$ and the dimension $n=n(T)$ of $\bbx_t$ grow to infinity such that
    \begin{align*}
        \lim_{T\to\infty}n/T=c_0\in[0,\infty).
    \end{align*}
\end{Ap}
Note that our main results are valid under $\lim_{T\to\infty}n/T=0$, which includes the classical low-dimensional case with fixed $n$ and $T\to\infty$.

To construct the statistic for \eqref{Eq of high dimensional mean test}, let $T=T_1+T_2$ such that $\lim_{T\to\infty}T_1/T=c_1\in(0,1)$. Define
\begin{align}
    \left\{\begin{array}{l}
        \displaystyle X_t=\frac{1}{T_1}\sum_{s=1}^{T_1}(\bbx_s'\bbx_{t+T_1}-d_0^2),\quad t=1,\cdots,T_2,\\[15pt]
        \displaystyle\hat{\tau}_1:=\frac{1}{T_2}\sum_{t=1}^{T_2}X_t\quad\text{and}\quad\hat{\sigma}_1^2:=\frac{1}{T_2}\sum_{t=1}^{T_2}X_t^2-\hat{\tau}_1^2.
    \end{array}\right.\label{Eq of estimators}
\end{align}
Let
\begin{align}
    \mcM_{t,T_2}:=\frac{1}{\sqrt{T_2}}\sum_{s=1}^t\frac{\theta_sX_s}{\sqrt{\hat{\tau}_1^2+\hat{\sigma}_1^2}},\quad t=1,\cdots,T_2,\label{Eq of statistic of similarity test}
\end{align}
where $\vec{\theta}_t=(\theta_1,\cdots,\theta_t)'$ such that $\theta_1=1$ and for $t=2,\cdots,T_2$
\begin{align}
    \theta_t=\left\{\begin{array}{cc}
        1, & \text{if }\ \mcM_{t-1,T_2}\leq0, \\[6pt]
        -1, & \text{if }\  \mcM_{t-1,T_2}>0.
    \end{array}\right.\label{Eq of control}
\end{align}
Note that the construction process \eqref{Eq of statistic of similarity test} satisfies that for $t\geq2$
\begin{align}
    \mcM_{t,T_2}=\mcM_{t-1,T_2}+\frac{\theta_tX_t}{\sqrt{T_2(\hat{\tau}_1^2+\hat{\sigma}_1^2)}}:=\mcM_{t-1,T_2}+\tilde{h}(X_t),\label{Eq of TAB procedure 2}
\end{align}
where $\mbE[X_t]=\Vert\bbmu\Vert_2^2-d_0^2$ by \eqref{Eq of estimators}. Intuitively, under $H_0$, $\mbE[X_t]\geq0$, and the parameter $\theta_t$ makes the innovation $\tilde{h}(X_t)$ to have in average an opposite sign to the current value $\mcM_{t-1,T_2}\leq0$ and vice versa, which defines a negative feedback process. By contrast, $\{\mcM_{t,T_2}\}$ acts as a positive feedback under $H_1$, where $\mbE[X_t]<0$. Hence, as $t\to\infty$, the negative feedback process will restrict our statistic $\mcM_{t,T_2}$ asymptotically concentrating around 0 under $H_0$, while the positive feedback process will push $\mcM_{t,T_2}$ asymptotically diverging to infinity under $H_1$.

\subsection{Rationale of the test statistic for \texorpdfstring{\eqref{Eq of deviation test}}{(2)}}\label{sec of Rationale}
There is important difference between our test statistic $\mcM_{t,T_2}$ in \eqref{Eq of statistic of similarity test} and the one for the univariate deviation test in \cite{chen2023strategic}, the former having little in common with the latter.  Precisely, given the independent univariate samples $\{x_t\in\mbR:t=1,\cdots,T\}$ with population mean of $\mu>0$, \cite{chen2023strategic} consider the following univariate deviation test:
\begin{align}
    H_0^{\texttt{u}}:\mu-d_0>0\quad\text{versus}\quad H_1^{\texttt{u}}:\mu-d_0\leq0.\label{Eq of deviation test one dim}
\end{align}
\cite{chen2023strategic} constructed their statistic $\hat{b}_{T,t}^{\texttt{u}}$ as in \eqref{Eq of TAB construction 1}, where $\hat{b}_{T,1}^{\texttt{u}}=x_1-d_0$ and for $t=2,\cdots,T$
\begin{align}
    \hat{b}_{T,t}^{\texttt{u}}=\hat{b}_{T,t-1}^{\texttt{u}}+h(x_t),\quad h(x_t)=\eta_t(x_t-d_0),\quad\eta_t=\left\{\begin{array}{cc}
        1, & \text{if }\ \hat{b}_{T,t-1}^{\texttt{u}}\leq0, \\[6pt]
        -1, & \text{if }\  \hat{b}_{T,t-1}^{\texttt{u}}>0.
    \end{array}\right.\notag
\end{align}
By our previous discussion, $\hat{b}_{T,t}^{\texttt{u}}$ acts as a negative/positive feedback process under $H_0/H_1$. Notably, to derive such properties, the innovation function $h(\cdot)$ is a product of the control $\eta_t$ and the target variable $\mcT_t:=x_t-d_0$, where $\eta_t$ controls the sign of $\mcT_t$ based on current value $\hat{b}_{T,t-1}^{\texttt{u}}$ and   $\mbE[\mcT_t]=\mu-d_0$ is the term to test  in \eqref{Eq of deviation test one dim}. Particularly, the independence between $\mcT_t$ and $\hat{b}_{T,t-1}^{\texttt{u}}$ is important in technical proofs in \cite{chen2023strategic}.

Back to the high-dimensional deviation test \eqref{Eq of high dimensional mean test}, to extend \cite{chen2023strategic}'s idea above, the key is to construct the innovation function $\tilde{h}(\cdot)$ in \eqref{Eq of TAB procedure 2}. First, the control $\theta_t$ in \eqref{Eq of control} parallels its univariate counterpart $\eta_t$ above. As for the target variable $\tilde{\mcT}_t$ at each step $t$, we should have   $\mbE[\tilde{\mcT}_t ]=\Vert\bbmu\Vert_2^2-d_0^2$. A natural and direct idea to consider
\begin{align*}
    \hat{b}_{T,t}^{\texttt{h}}=\hat{b}_{T,t-1}^{\texttt{h}}+\theta_t\tilde{\mcT}_t,\quad \tilde{\mcT}_t=\bbx_t'\bbx_t-\tr(\bbSig)-d_0^2,
\end{align*}
Obviously, $\mbE[\tilde{\mcT}_t]=\Vert\bbmu\Vert_2^2-d_0^2$ and $\tilde{\mcT}_t$ is independent of $\hat{b}_{T,t-1}^{\texttt{h}}$, so $\hat{b}_{T,t}^{\texttt{h}}$  follows a negative/positive feedback process as expected. However, even though $\tr(\bbSig)$ is known, $\tilde{\mcT}_t$ faces a main drawback that $\Var(\tilde{\mcT}_t)\asymp\mrO(n)$ compared with $\mbE[\tilde{\mcT}_t]\leq\mrO(1)$. Therefore, the huge variance of $\tilde{\mcT}_t$ will complete mask any significance of the testing term $\mbE[\tilde{\mcT}_t]=\Vert\bbmu\Vert_2^2-d_0^2$ and makes the resulting test powerless.

Therefore, a suitable target variable $\tilde{\mcT}_t$ should simultaneously satisfy $\mbE[\tilde{\mcT}_t]=\Vert\bbmu\Vert_2^2-d_0^2$ and $\Var(\tilde{\mcT}_t)\asymp\mrO(1)$. In high-dimensional statistical literature, a widely-used estimator of $\Vert\bbmu\Vert_2^2$ is $\frac{1}{T(T-1)}\sum_{s\neq t}^T\bbx_s'\bbx_t$, see \cite{chen2010two}. Inspired by this estimator, we consider $X_t=\frac{1}{T_1}\sum_{s=1}^{T_1}(\bbx_s'\bbx_{t+T_1}-d_0^2)$ in \eqref{Eq of estimators} as the target variable, we can show that $\mbE[X_t]=\Vert\bbmu\Vert_2^2-d_0^2$ and $\Var(X_t)\asymp\mrO(1)$.

On the other hand, note that $X_t$ is $\sigma\{\bbx_1,\cdots,\bbx_{T_1},\bbx_{t+T_1}\}$-measurable\footnote{$\sigma\{\bbx_1,\cdots,\bbx_{T_1},\bbx_{t+T_1}\}$ is the sigma field generated by $\{\bbx_1,\cdots,\bbx_{T_1},\bbx_{t+T_1}\}$.}, so $\{X_1,\cdots,X_t\}$ are all correlated. By \eqref{Eq of statistic of similarity test}, $\mcM_{t-1,T_2}$ is $\sigma\{X_1,\cdots,X_{t-1}\}$-measurable and thus correlated with the $X_t$. This is a key new difficulty in the high-dimensional case, very different of the univariate case in \cite{chen2023strategic}. To overcome this difficulty, we specially designed $X_t$ to have enough concentration property so that   $\mbE[X_t|X_1,\cdots,X_{t-1}]\overset{\mbP}{\longrightarrow}\mbE[X_t]$ and the correlations between $\mcM_{t-1,T_2}$ and $\tilde{h}(X_t)$ remain well controlled. Readers can refer to proofs in the Appendix for more details about this key issue.

\subsection{Asymptotic distribution of the statistic \texorpdfstring{\eqref{Eq of statistic of similarity test}}{(7)}}\label{sec of asymptotic distribution}
Asymptotically, $\mcM_{T_2,T_2}$ weakly converges to the \emph{dependent bandit distribution} $\mcD\mcB(\kappa,\beta)$, which is the distribution of $Y_1$ following the SDE as below:
\begin{align}
	\left\{\begin{array}{l}
		dY_s=\kappa\sign(Y_s)ds+(1-2\beta\sign(Y_s)V_s)^{1/2}dB_s, \\
		dV_s=-\sign(Y_s)ds,
	\end{array}\right.\quad s\in[0,1],\label{Eq of 2-dim SDE}
\end{align}
where $Y_0=V_0=0$, $\kappa\in\mbR$ such that $\sign(\kappa)=\sign(d_0-\Vert\bbmu\Vert_2)$ and $|\beta|<1/2$, and $\{B_s:s\in[0,1]\}$ is the standard Brownian motion.

Readers can refer to Figure \ref{Fig of spiked distribution} for a density plot of bandit distributions under different values of $\kappa$.
\begin{figure}[ht]
    \centering
    \includegraphics[width=0.9\linewidth]{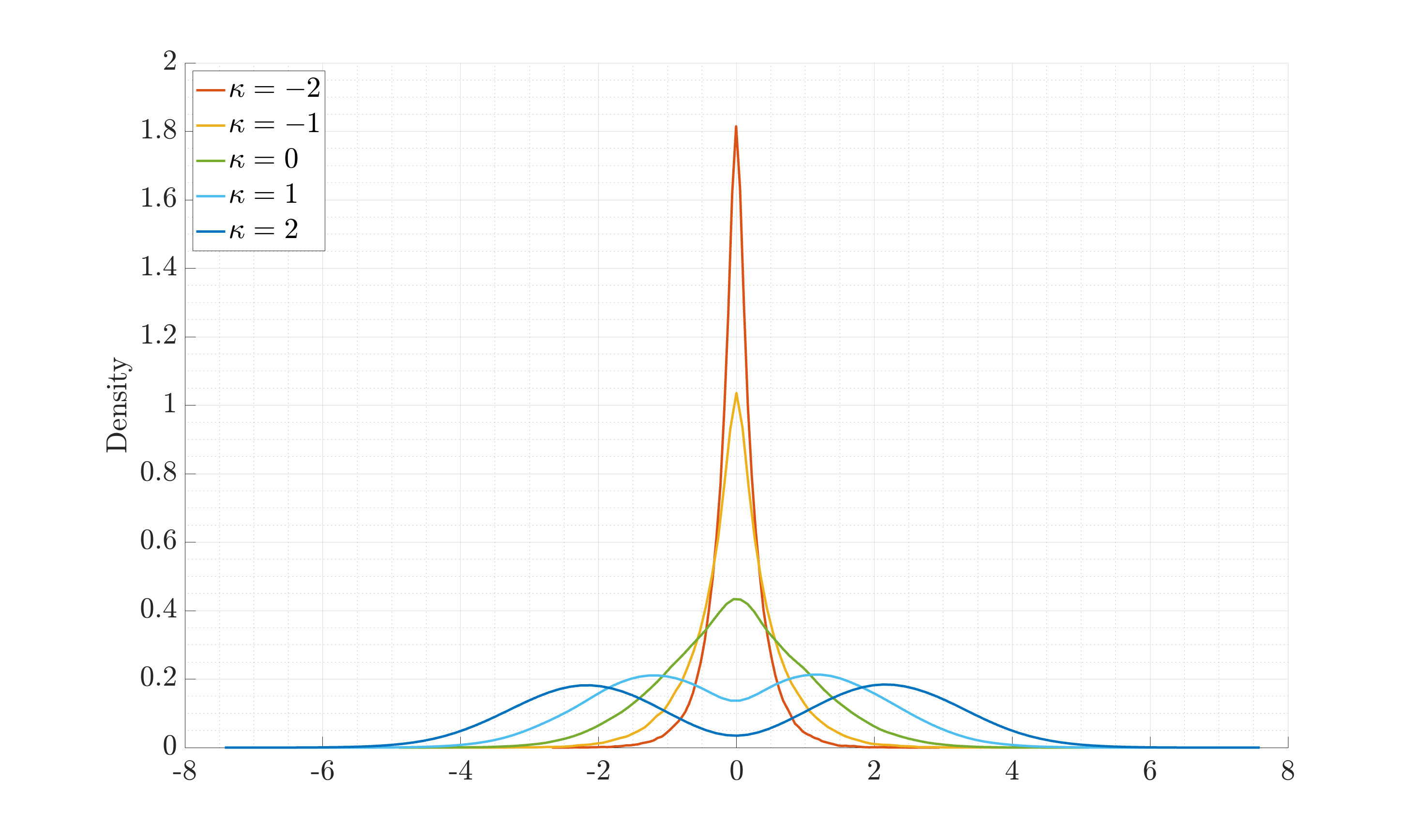}
    \caption{Density plot of $\mcD\mcB(\kappa,\beta)$ via Euler-Maruyama Monte Carlo simulations under different values of $\kappa$ and $\beta=1/4$.}
    \label{Fig of spiked distribution}
\end{figure}
Notably, when $\kappa<0$, $\mcD\mcB(\kappa,\beta)$ is unimodal and more concentrated around 0 than the standard normal distribution, while when $\kappa>0$, $\mcD\mcB(\kappa,\beta)$ has two modes dispersing away from 0. Therefore, the key distinction between the two hypotheses in \eqref{Eq of deviation test} is the different forms of $\mcD\mcB(\kappa,\beta)$'s density under $\kappa<0$ ($H_0$) and $\kappa\geq0$ ($H_1$).

We now specify the data generation process.
\begin{Ap}\label{Ap of noise}
    We assume that $\bbx_t=\bbmu+\bbGa\bby_t\in\mbR^n$ such that $\bbGa\bbGa'=\bbSig\in\mbR^{n\times n}$ is invertible and $\Vert\bbmu\Vert_2,\Vert\bbSig\Vert,\Vert\bbSig^{-1}\Vert\leq\mrO(1)$. Moreover, $\bby_t=(y_{1,t},\cdots,y_{m,t})'\in\mbR^m$, where all $y_{j,t}$ are independent such that $\mbE[y_{j,t}]=0,\mbE[y_{j,t}^2]=1$ for $j=1,\cdots,m$ and $t=1,\cdots,T$ and $\kappa_4:=\sup_{j,t}\big|\mbE[y_{j,t}^4]-3\big|<\infty$.
\end{Ap}
\begin{remark}
    We are only interested in the case where $\Vert\bbmu\Vert_2\leq\mrO(1)$, since if $\lim_{n\to\infty}\Vert\bbmu\Vert_2=\infty$, the hypothesis test \eqref{Eq of deviation test} is trivial. Moreover, for moment conditions on $y_{j,t}$ in Assumption \ref{Ap of noise}, they coincide with those in some well-known reference like \cite{bai1996effect,chen2010two}.
\end{remark}
\begin{thm}\label{Thm of strategic CLT}
    Under Assumptions \ref{Ap of high dimensionality} and \ref{Ap of noise}, let 
    \begin{align*}
    	\left\{\begin{array}{l}
    		\tau_1:=\Vert\bbmu\Vert_2^2-d_0^2,\\
    		\sigma_1^2:=\bbmu'\bbSig\bbmu+\frac{1}{T_1}\tr(\bbSig^2),
    	\end{array}\right.
    \end{align*}
    we have as $T_2\to\infty$
    \begin{align}
        \mcM_{T_2,T_2}\overset{d}{\sim}\mcD\mcB(\kappa_{1,T_2},\beta_{1,T_2}),\quad \kappa_{1,T_2}:=-\tau_1\sqrt{\frac{T_2}{\tau_1^2+\sigma_1^2}},\quad\beta_{1,T_2}:=\frac{T_2\bbmu'\bbSig\bbmu}{T_1(\tau_1^2+\sigma_1^2)},\label{Eq of kappa 1}
    \end{align}
    where we require that $\lim_{T\to\infty}T_2/T_1\in(0,1/2)$.
\end{thm}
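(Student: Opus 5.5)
We will prove the theorem by a martingale decomposition and a functional limit. First condition on the first-half sample $\mathcal{F}_0:=\sigma\{\bbx_1,\dots,\bbx_{T_1}\}$. Write $\bar{\bbx}:=T_1^{-1}\sum_{s\le T_1}\bbx_s=\bbmu+\bbGa\bar{\bby}$, so that
\begin{align*}
X_t=\bar{\bbx}'\bbx_{t+T_1}-d_0^2=\bar{\bbx}'\bbmu-d_0^2+\bar{\bbx}'\bbGa\bby_{t+T_1}.
\end{align*}
Conditionally on $\mathcal{F}_0$, the $X_t$ are therefore i.i.d. They have conditional mean $\tau_1+\bbmu'\bbGa\bar{\bby}$ and conditional variance $\bar{\bbx}'\bbSig\bar{\bbx}=\bbmu'\bbSig\bbmu+2\bbmu'\bbSig\bbGa\bar{\bby}+\bar{\bby}'\bbGa'\bbSig\bbGa\bar{\bby}$.

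Using Assumption \ref{Ap of noise}, the standard quadratic-form moment bounds (the fourth moment enters through $\kappa_4$) give $\bar{\bby}'\bbGa'\bbSig\bbGa\bar{\bby}=T_1^{-1}\tr(\bbSig^2)+O_{\mbP}(\sqrt{n}/T_1)$. The linear term $\xi:=\bbmu'\bbGa\bar{\bby}$ is of exact order $T_1^{-1/2}$, with variance $\bbmu'\bbSig\bbmu/T_1$. Hence the conditional variance equals $\sigma_1^2+o_{\mbP}(1)$. The same computation, together with a law of large numbers across $t$, gives $\hat\tau_1^2+\hat\sigma_1^2\to\tau_1^2+\sigma_1^2$ in probability, so the normalization may be replaced by the deterministic constant.

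The crucial point is the random drift. After scaling, $\sqrt{T_2}\,\xi/\sqrt{\tau_1^2+\sigma_1^2}$ is asymptotically $N(0,\beta_{1,T_2})$, so it does not vanish. This is the source of the second component $V$ and of $\beta$ in \eqref{Eq of 2-dim SDE}.

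To build the two-dimensional process, set $s=t/T_2$ and $Y^{T}_s:=\mcM_{\lfloor sT_2\rfloor,T_2}$. Since $\theta_t=-\sign(\mcM_{t-1,T_2})$, the increments decompose as
\begin{align*}
\mcM_{t,T_2}-\mcM_{t-1,T_2}=\frac{-\sign(\mcM_{t-1})(\tau_1+\xi)+\theta_t\bar{\bbx}'\bbGa\bby_{t+T_1}}{\sqrt{T_2(\tau_1^2+\sigma_1^2)}}.
\end{align*}
The drift is then $\kappa_{1,T_2}\sign(Y)\,ds$ plus a correction $-\sqrt{T_2}\,\xi\,\sign(Y)\,ds/\sqrt{\tau_1^2+\sigma_1^2}$. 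To handle $\xi$, I would not condition on $\mathcal{F}_0$. Instead, I would view $\xi$ as the terminal value of a Gaussian "hidden" martingale: by a Doob/Gaussian integration-by-parts argument, conditioning on the observed path $\{X_1,\dots,X_{t-1}\}$ updates the law of $\xi$.

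The cleaner route is to introduce $V_s:=\int_0^s\theta\,dr=-\int_0^s\sign(Y_r)\,dr$. Then the drift correction equals $-\sqrt{T_2}\,\xi\,dV_s$-type terms. Integrating out the Gaussian $\xi$ (variance $\beta_{1,T_2}$ after scaling) turns the $\theta$-weighted drift into a random-time-dependent quadratic variation. Computing the conditional second moment of the increment given the past yields
\begin{align*}
1+2\beta\,\theta_t\,(\text{running sign sum})=1-2\beta\,\sign(Y_s)V_s.
\end{align*}
Here the cross term arises from $\mathrm{Cov}(\xi,\bar{\bbx}'\bbGa\bby)$-type interactions, namely the $2\bbmu'\bbSig\bbGa\bar{\bby}$ piece of the conditional variance combined with the drift. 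This reproduces the diffusion coefficient $(1-2\beta\sign(Y_s)V_s)^{1/2}$. The condition $T_2/T_1<1/2$ guarantees $|2\beta V_s|<1$ (note $|V_s|\le s\le1$), so the coefficient is nondegenerate.

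Having identified the conditional drift and quadratic variation, I would prove tightness of $(Y^T,V^T)$ via a Kolmogorov/Aldous criterion, using uniform fourth moments of the increments. I would then show that every limit point solves the martingale problem for \eqref{Eq of 2-dim SDE}. The main obstacle is that $\sign$ is discontinuous, so the martingale-problem limit cannot be passed naively. I would handle this by showing that the limit spends zero Lebesgue time at $Y=0$, since the diffusion coefficient is nondegenerate, and then invoking the weak existence and uniqueness of \eqref{Eq of 2-dim SDE} established later in the paper. To obtain the statement with the $T_2$-dependent parameters $(\kappa_{1,T_2},\beta_{1,T_2})$ in the $\overset{d}{\sim}$ sense, I would argue along subsequences on which $\kappa_{1,T_2}$ converges in $[-\infty,\infty]$. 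When $|\kappa|\to\infty$, both sides concentrate at $0$ or escape to infinity, which is checked directly. Finally, I would use the continuity of the law of $Y_1$ in $(\kappa,\beta)$, which follows from the backward Kolmogorov equation.
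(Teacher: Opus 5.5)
\textbf{The step that fails.} Your diagnosis of where $\beta$ comes from is right. Conditionally on $\mathcal{F}_0$, the $X_t$ are i.i.d.\ with mean $\tau_1+\xi$, and $\zeta:=\xi\sqrt{T_2/(\tau_1^2+\sigma_1^2)}$ is asymptotically $N(0,\beta_{1,T_2})$. The step that turns this into the diffusion coefficient is false, however, and the martingale-problem identification built on it fails. For every $t$ one has $\mbE[X_t^2\mid \bbx_1,\dots,\bbx_{T_1+t-1}]=(\tau_1+\xi)^2+\bar{\bbx}'\bbSig\bar{\bbx}=\tau_1^2+\sigma_1^2+O_{\mbP}(T^{-1/2})$. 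The piece $2\bbmu'\bbSig\bbGa\bar{\bby}$ is $O_{\mbP}(T_1^{-1/2})$, does not depend on $t$, and carries no information about the sign path, so no term $-2\beta\sign(Y_s)V_s$ can emerge. Conditioning only on $X_1,\dots,X_{t-1}$ merely averages this expression. More basically, a random drift, whether integrated out or filtered, never changes quadratic variation. The predictable quadratic variation of $s\mapsto\mcM_{\lfloor sT_2\rfloor,T_2}$ is therefore $s\,\sigma_1^2/(\tau_1^2+\sigma_1^2)+o_{\mbP}(1)$, which is $s+o_{\mbP}(1)$ whenever $\kappa_{1,T_2}$ stays bounded, and every limit point of your tight family has $\langle Y\rangle_s=s$. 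In fact your own conditional-i.i.d.\ observation, combined with the univariate strategic CLT applied given $\mathcal{F}_0$, identifies the limit: $dY_s=(\kappa-\zeta)\sign(Y_s)ds+dB_s$ with $\zeta\sim N(0,\beta)$ independent of $B$. This is a Gaussian mixture over the drift parameter of the univariate bandit diffusion. Solutions of \eqref{Eq of 2-dim SDE} have $\langle Y\rangle_s=s+\beta V_s^2$. So your limit points do not solve that martingale problem, and the weak uniqueness of \eqref{Eq of 2-dim SDE} cannot be invoked.

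\textbf{How the paper gets $\beta$, and what is missing.} The paper never argues at the process level. It telescopes (Lindeberg) against $h(x,v,t)=\mbE[g(Y_1^{x,v,t},V_1^{v,t})]$, and the $\beta$-term is produced inside the \emph{first-order} term. It applies a cumulant expansion (Gaussian integration by parts) in the first-half variables to $\mbE[\xi\,\theta_t\,\partial_x h(\mcM_{t-1},\mcS_{t-1},t/T_2)]$, using $\partial_\xi\mcM_{t-1}=\sqrt{T_2/(\tau_1^2+\sigma_1^2)}\,\mcS_{t-1}$ with the $\theta_r$ held fixed. This gives $\beta_{1,T_2}\mbE[\mcS_{t-1}\theta_t\partial_{xx}h]$, which merges with $\tfrac12\partial_{xx}h$ into the coefficient $\tfrac12(1-2\beta\sign(x)v)$. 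That is a drift-to-diffusion conversion at the level of one-time marginals, not a statement about conditional variances. To finish along your route you would need exactly such a marginal identity: that $\int\mcB(\kappa-z)\,N(0,\beta)(dz)$ and $\mcD\mcB(\kappa,\beta)$ give the same law of $|Y_1|$. Your proposal does not supply it. Be warned that it is delicate. The $\theta_r$, and hence $\theta_t$ and $\mcS_{t-1}$, do depend on $\xi$ through the signs of $\mcM_{r-1}$, and a second-order check at $\kappa=0$ suggests the identity fails. For the univariate bandit diffusion $Z^{(k)}$ one has $|Z^{(k)}_1|\overset{d}{=}\sup_{u\le 1}(W_u+ku)$, so $k\mapsto\mbE[(Z^{(k)}_1)^2]-1-k^2/2$ is odd and the mixture has $\mbE[Y_1^2]=1+\beta/2$ exactly. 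For \eqref{Eq of 2-dim SDE}, by contrast, $\mbE[Y_1^2]=1+\beta\,\mbE[V_1^2]$, and a first-order perturbation in $\beta$ gives $\mbE[V_1^2]=\tfrac12-\tfrac{\beta}{6}+o(\beta)$.
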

Readers can refer to the proof of Theorem \ref{Thm of strategic CLT} in Appendix \S\ref{sec of proof norm test}.
\begin{remark}
	\begin{enumerate}
		\item The requirement of $\lim_{T\to\infty}T_2/T_1\in(0,1/2)$ in Theorem \ref{Thm of strategic CLT} ensures that $|\beta_{1,T_2}|<1/2$ is always valid, so that $\mcD\mcB(\kappa_{1,T_2},\beta_{1,T_2})$ in \eqref{Eq of kappa 1} is well defined.
		\item In practice, $\bbmu'\bbSig\bbmu$ can be estimated via
		\begin{align}
			\widehat{\bbmu'\bbSig\bbmu}=\hat{\sigma}_1^2-\frac{\sum_{t_1\neq t_2}(\bbx_{t_1}'\bbx_{t_2})^2}{T_1T(T-1)}.\label{Eq of estimator}
		\end{align}
		\item According to \cite{chen2023strategic}, the asymptotic distribution of $\hat{b}_{T,t}^{\mathtt{u}}$ for the univariate deviation test \eqref{Eq of deviation test one dim} is the bandit distribution $\mcB(\kappa)$, which is the distribution of $Z_1$ following the SDE as below:
		\begin{align*}
			dZ_s=\kappa\sign(Z_s)ds+dB_s,\quad Z_0=0,\ s\in[0,1].
		\end{align*}
		The above SDE has been comprehensively studied in \cite{mel1979strong}. Compared with \cite{chen2023strategic}'s result, our statistic $\fM_{T_2,T_2}$ depends on a more complicate SDE \eqref{Eq of 2-dim SDE} involving a more volatility part $(1-2\beta\sign(Y_s)V_s)^{1/2}$, and $V_s$ also depends on $Y_s$. As we mentioned in Introduction, some properties of \eqref{Eq of 2-dim SDE} (see \cite[Lemma 5.2]{chen2022strategy} paralleling results for $Z_s$ above) is essential to establish the asymptotic behavior of $\fM_{T_2,T_2}$. Hence, we investigate the solution of \eqref{Eq of 2-dim SDE} in \S\ref{App of SDE} for preliminaries of Theorem \ref{Thm of strategic CLT}.
	\end{enumerate}
\end{remark}

\subsection{Size and power}\label{Sec of size and power}
Given a significance level $\alpha\in(0,1)$, let $z_{\alpha/2}$ be the upper $\alpha/2$ quantile of $\mcD\mcB(0,\beta_{1,T_2})$, the rejection region of our hypothesis test \eqref{Eq of high dimensional mean test} is
\begin{align*}
    \big|\mcM_{T_2,T_2}\big|>z_{\alpha/2}.
\end{align*}
\begin{remark}
	Recall the definition of $\beta_{1,T_2}$ in \eqref{Eq of kappa 1}, we can estimate $\beta_{1,T_2}$ via $\hat{\tau}_1,\hat{\sigma}_1$ and $\widehat{\bbmu'\bbSig\bbmu}$, see \eqref{Eq of estimators} and \eqref{Eq of estimator}. Once we obtain $\hat{\beta}_{1,T_2}$, we can further estimate $z_{\alpha/2}$ via Monte Carlo simulations.
\end{remark}
For the size and power of our proposed statistic $\mcM_{T_2,T_2}$, Theorem \ref{Thm of strategic CLT} implies that 
\begin{align*}
    \lim_{T\to\infty}\big|\mbP\big(|\mcM_{T_2,T_2}|>z_{\alpha/2}\big)-\mbP(|\mcD\mcB(\kappa_{1,T_2},\beta_{1,T_2})|> z_{\alpha/2})\big|=0,
\end{align*}
so it suffices to analyze the tail probabilities of $\mcD\mcB(\kappa_{1,T_2},\beta_{1,T_2})$. Intuitively, according to Figure \ref{Fig of spiked distribution}, $\mbP(|\mcD\mcB(\kappa,\beta)|> z_{\alpha/2})$ is increasing in $\kappa\in\mbR$. Here, we provide that
\begin{pro}\label{Pro of tail probability}
	For any $\kappa\in\mbR$ and $|\beta|<1/2$, we have
	\begin{align*}
		\left\{\begin{array}{ll}
			\mbP(|\mcD\mcB(\kappa,\beta)|>z_{\alpha/2})\leq\mbP(|\mcD\mcB(0,\beta)|>z_{\alpha/2}), & \kappa\leq0, \\
			\mbP(|\mcD\mcB(\kappa,\beta)|>z_{\alpha/2})\geq\mbP(|\mcD\mcB(0,\beta)|>z_{\alpha/2}), & \kappa>0.
		\end{array}\right.
	\end{align*}
	Moreover, when $\kappa>0$ and $|\kappa|$ is large enough, we have
	\begin{align*}
		1-\mbP(|\mcD\mcB(\kappa,\beta)|>z_{\alpha/2})\leq\mrO(e^{-C\kappa^2})
	\end{align*}
	for some constant $C>0$.
\end{pro}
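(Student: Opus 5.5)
We will compare the laws of $|Y_1|$ for different drifts through a coupling argument, and then handle the large-$\kappa$ tail with an exponential martingale bound.

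\textbf{Reduction.} By the symmetry $(Y,V,B)\mapsto(-Y,-V,-B)$ of \eqref{Eq of 2-dim SDE}, the sign product $\sign(Y_s)V_s$ is invariant. So the pair $(|Y_s|,\sign(Y_s)V_s)$ is the natural object. Set $R_s=|Y_s|$ and $U_s=\sign(Y_s)V_s$. Away from zeros of $Y$, Tanaka's formula gives
\begin{align*}
dR_s=\kappa\,ds+(1-2\beta U_s)^{1/2}\,dW_s+dL_s,
\end{align*}
where $W$ is a Brownian motion and $L$ is the local time of $Y$ at $0$. We have $dV_s=-\sign(Y_s)\,ds$, and $U$ flips sign at each zero of $Y$. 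Between zeros, $U_s$ decreases at unit speed.

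The key observation is that, between excursions, the diffusion coefficient $(1-2\beta U_s)$ depends only on the elapsed time and on the value of $V$ at the last zero. The drift is simply the constant $\kappa$. We will use the well-posedness and Kolmogorov-equation results of \S\ref{App of SDE} for the two-dimensional SDE.

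\textbf{Monotonicity in $\kappa$.} Define
\begin{align*}
u(s,y,v;\kappa)=\mbP(|Y_1|>z\mid Y_s=y,V_s=v),
\end{align*}
which solves the backward Kolmogorov equation
\begin{align*}
\partial_s u+\kappa\,\sign(y)\,\partial_y u-\sign(y)\,\partial_v u+\tfrac12(1-2\beta\,\sign(y)v)\,\partial_{yy}u=0,
\end{align*}
with terminal condition $u(1,\cdot)=\mbI_{\{|y|>z\}}$.

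The plan is to show that $u$ is nondecreasing in $|y|$ in the right sense, namely $\sign(y)\,\partial_y u\ge0$. This would follow by a comparison/maximum-principle argument applied to $w=\sign(y)\,\partial_y u$, using the symmetry $u(s,y,v)=u(s,-y,-v)$ and the monotone terminal data.

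Given this, write $u^\kappa$ and $u^0$ for the solutions with drift $\kappa$ and $0$. The difference $\Delta=u^\kappa-u^0$ satisfies the zero-drift equation with source $\kappa\,\sign(y)\,\partial_y u^\kappa$ and zero terminal data. Duhamel/Feynman–Kac then expresses $\Delta$ at the origin as
\begin{align*}
\Delta(0,0,0)=\kappa\,\mbE\!\int_0^1 \sign(Y^0_s)\,\partial_y u^\kappa(s,Y^0_s,V^0_s)\,ds.
\end{align*}
The integrand is $\ge0$, so $\Delta(0,0,0)$ has the sign of $\kappa$. This gives both displayed inequalities, and the case $\kappa=0$ is trivial.

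\textbf{Main obstacle.} Justifying $\sign(y)\,\partial_y u\ge0$ is the hard part. The coefficient $\sign(y)$ is discontinuous, and the volatility depends on $V$, which is itself coupled to $\sign(Y)$. As a first attempt, I would avoid PDE regularity by using a pathwise coupling. Run $Y^\kappa$ and $Y^0$ with the same $B$ through a Girsanov-free construction on $R$: since the drift of $R$ is larger under $\kappa>0$, a comparison theorem for one-dimensional reflected SDEs gives $R^\kappa\ge R^0$ between common zero sets. The delicate point is that the clocks $U$ differ after the processes stop sharing zeros. If this coupling fails, I would fall back on smoothing $\sign$ by $\tanh(y/\epsilon)$, proving the monotonicity for the smooth PDE, and passing to the limit via the weak uniqueness from \S\ref{App of SDE}.

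\textbf{Exponential tail for large $\kappa$.} For $\kappa>0$, the drift pushes $|Y|$ outward at rate $\kappa$. On the event $\{|Y_1|\le z\}$, the process must stay near $0$. The plan is to:
\begin{enumerate}
\item write $R_1\ge \kappa(1-\tau)+M_1-M_\tau$, where $\tau$ is the last zero before time $1$ and $M$ is the martingale part;
\item note that the quadratic variation of $M$ is bounded by $1+2|\beta|\le 2$, since $|U_s|\le 1$;
\item bound the event $\tau\ge 1/2$ (the process returns to zero late) by the probability that $M$ compensates a drift of order $\kappa$ over an interval of length $1/2$, using the Bernstein/exponential martingale inequality, which gives $e^{-C\kappa^2}$;
\item bound the event $\tau<1/2$ and $R_1\le z$ by $\mbP\big(\sup_{t}|M_1-M_t|\ge \kappa/2-z\big)\le 2e^{-C\kappa^2}$.
\end{enumerate}
Together these yield $1-\mbP(|\mcD\mcB(\kappa,\beta)|>z_{\alpha/2})\le \mrO(e^{-C\kappa^2})$.
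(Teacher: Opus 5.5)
The paper states this proposition without proof; nothing in the appendix addresses it, so there is no argument of theirs to compare against. On its own merits, your tail argument is sound. The monotonicity part, however, rests on the claim $\sign(y)\,\partial_y u\ge 0$, which is false near $y=0$.

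\textbf{The transmission condition rules it out.} The $Y$-equation carries no local-time term. Applying It\^o--Tanaka to the martingale $u(s,Y_s,V_s)$ therefore forces $\partial_y u(s,0+,v)=\partial_y u(s,0-,v)$. Your inequality on both sides of $0$ would then force $\partial_y u(s,0,v)=0$ for all $(s,v)$. The symmetry $u(s,y,v)=u(s,-y,-v)$ gives this only at $v=0$.

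\textbf{Explicit check.} When $\beta v\neq 0$ the variances $\sigma_\pm^2=1\mp 2\beta v$ on $\{\pm y>0\}$ differ. To leading order as $s\uparrow 1$ (freeze $v$, drop the drift), $Y$ near $0$ is an oscillating Brownian motion. Writing $Y/\sigma(Y)$ as a skew Brownian motion with parameter $\sigma_-/(\sigma_++\sigma_-)$ gives
\begin{align*}
\partial_y u(s,0,v)\approx\frac{2}{\sigma_++\sigma_-}\Big(\phi_{1-s}(z/\sigma_+)-\phi_{1-s}(z/\sigma_-)\Big),\qquad \phi_r(x)=\frac{e^{-x^2/(2r)}}{\sqrt{2\pi r}}.
\end{align*}
This is strictly negative when $\beta v>0$. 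Starting deeper in the low-volatility half-line makes it harder to reach the high-volatility half-line, and that is what produces $|Y_1|>z$. Such states are visited by $(Y^0,V^0)$ with positive probability. So your Duhamel integrand is negative on a non-null set, and the sign of $\Delta(0,0,0)$ does not follow.

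\textbf{The fallbacks do not repair this.} Smoothing $\sign$ by $\tanh(y/\epsilon)$ yields a $C^1$ function $u$, so the same obstruction at $y=0$ persists for every $\epsilon$. The comparison theorem for reflected SDEs needs a common diffusion coefficient. Here $1-2\beta U^\kappa$ and $1-2\beta U^0$ differ as soon as the two excursion histories differ, so the pathwise coupling gives no ordering.

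\textbf{What is missing.} You need a comparison that tracks $U$ jointly with $|Y|$. A larger $\kappa$ lengthens excursions, during which $U$ decreases at unit speed, so $\kappa$ also acts through the volatility $1-2\beta U$. Heuristically this channel reinforces the drift when $\beta\ge 0$; that is the only case produced by Theorem~\ref{Thm of strategic CLT}, since $\beta_{1,T_2}\ge 0$. When $\beta<0$ it opposes the drift, so a proof for all $|\beta|<1/2$ must control this competition, and yours never confronts it.

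\textbf{Tightening the tail bound.} Make step 3 precise with the Skorokhod representation $R=X+L$, $X_t=\kappa t+M_t$. A zero at $\tau\ge 1/2$ forces $X_\tau=-L_\tau\le 0$, i.e.\ $M_\tau\le-\kappa\tau\le-\kappa/2$. The exponential martingale inequality with $\langle M\rangle_1\le 1+2|\beta|$ then gives $O(e^{-C\kappa^2})$.
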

%
\noindent
{\bf Size of the test: }Under $H_0$, since $\tau_1=\Vert\bbmu\Vert_2^2-d_0^2>0$, then $\kappa_{1,T_2}\leq0$ by \eqref{Eq of kappa 1} and
\begin{align*}
    \mbP(|\mcB(\kappa_{1,T_2},\beta_{1,T_2})|>z_{\alpha/2})\leq \mbP(|\mcB(0,\beta_{1,T_2})|>z_{\alpha/2})=\alpha,
\end{align*}
so the asymptotic size of $\mcM_{T_2,T_2}$ satisfies 
\begin{align*}
    &\lim_{T\to\infty}\mbP\big(\big|\mcM_{T_2,T_2}\big|>z_{\alpha/2}\big|H_0\big)\leq\alpha.
\end{align*} 
{\bf Power of the test: }Under $H_1$, we have $\tau_1=\Vert\bbmu\Vert_2^2-d_0^2\leq0$. Note that $\tau_1=\tau_1(T)$ is a function of the sample size $T$. Let's analyze the asymptotic power of the test in the following three cases:
\begin{itemize}
    \item \emph{Case 1}. $\liminf_{T\to\infty}\tau_1^2T=\infty$. Since $\sigma_1^2=\bbmu'\bbSig\bbmu+\frac{1}{T_1}\tr(\bbSig^2)=\mrO(1)$ by Assumption \ref{Ap of noise}, then
    $\liminf_{T\to\infty}\kappa_{1,T_2}=\infty$ by \eqref{Eq of kappa 1}. Using Proposition \ref{Pro of tail probability}, the asymptotic power of $\mcM_{T_2,T_2}$ is full, that is
    \begin{align*}
        \lim_{T\to\infty}\mbP\big(\big|\mcM_{T_2,T_2}\big|>z_{\alpha/2}\big|H_1\big)=1.
    \end{align*}
    \item \emph{Case 2}. $\liminf_{T\to\infty}\tau_1^2T\in(0,\infty)$. By \eqref{Eq of kappa 1}, there exists a constant $C_0>0$ such that
    $$\liminf_{T\to\infty}\kappa_{1,T_2}>C_0,$$
    so the asymptotic power of $\mcM_{T_2,T_2}$ satisfies
    \begin{align*}
        \limsup_{T\to\infty}\mbP\big(\big|\mcM_{T_2,T_2}\big|>z_{\alpha/2}\big|H_1\big)<1.
    \end{align*}
    \item \emph{Case 3}. If $\liminf_{T\to\infty}\tau_1^2T=0$, we have $\liminf_{T\to\infty}\kappa_{1,T_2}=0$ and the asymptotic power of $\mcM_{T_2,T_2}$ satisfies
    \begin{align*}
        \lim_{T\to\infty}\mbP\big(\big|\mcM_{T_2,T_2}\big|>z_{\alpha/2}\big|H_1\big)=\alpha.
    \end{align*}
\end{itemize}
In summary, the power of $\mcM_{T_2,T_2}$ is asymptotically 1 when $\liminf_{T\to\infty}\tau_1^2T=\infty$, and otherwise it lies in the interval $(\alpha,1)$. Readers can refer to Figure \ref{Fig of Power} for an illustration of the asymptotic the power of $\mcM_{T_2,T_2}$ in these three cases.
\begin{figure}[ht]
    \centering
    \includegraphics[width=0.9\linewidth]{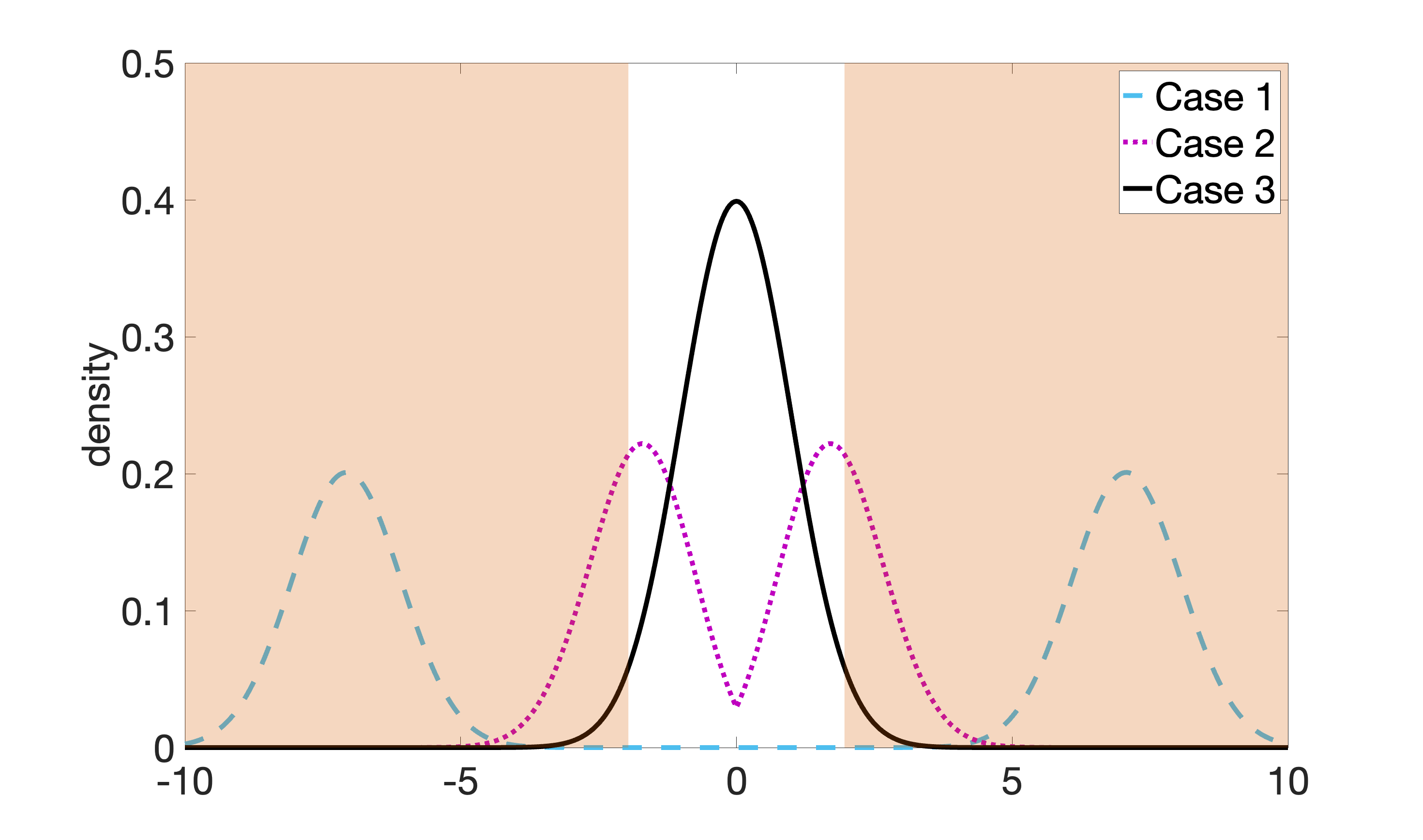}
    \caption{Density plot of $\mcM_{T_2,T_2}\overset{d}{\sim}\mcD\mcB(\kappa_{1,T_2},\beta_{1,T_2})$ under $H_1$, where the red area is the rejection region $|\mcM_{T_2,T_2}|>z_{\alpha/2}$. In case 1 as $|\kappa_{1,T_2}|\to\infty$, the blue dashed line suggests the densities of $\mcM_{T_2,T_2}$ almost lie in the rejection region, so the asymptotic power is full. In case 2 as $|\kappa_{1,T_2}|\to\mrO(1)$, the purple dotted line shows that only part of $\mcM_{T_2,T_2}$'s distributions lies in the rejection region, so the asymptotic power is in $(\alpha,1)$. In case 3 as $|\kappa_{1,T_2}|\to0$, $\mcM_{T_2,T_2}$ is asymptotically normal (continuous curve) and the asymptotic power equals $\alpha$.}
    \label{Fig of Power}
\end{figure}

\section{Two-sample case}\label{Sec of two sample}
In this section, we extend the proposed one-sample test in \S\ref{sec of norm test} to the two-sample case, since the basic frameworks are nearly the same, we only highlight the key differences and omit repetitive details for simplicity. Consider two independent $n$-dimensional datasets $\{\bbx_{j_1}:j_1=1,\cdots,M_1\}$ and $\{\bbz_{j_2}:j_2=1,\cdots,M_2\}$ such that all $\{\bbx_{j_1},\bbz_{j_2}\}$ are independent with $\mbE[\bbx_{j_1}]=\bbmu_1,\mbE[\bbz_{j_2}]=\bbmu_2$ and
$$\mbE[(\bbx_{j_1}-\bbmu_1)(\bbx_{j_1}-\bbmu_1)']=\bbSig_1,\mbE[(\bbz_{j_2}-\bbmu_2)(\bbz_{j_2}-\bbmu_2)']=\bbSig_2$$
for $j_1=1,\cdots,M_1$ and $j_2=1,\cdots,M_2$. Here, the sample sizes $M_1,M_2$ satisfy that
\begin{align}
    \lim_{M_1\to\infty}M_1/M_2=c_2\in(0,\infty).\label{Eq of two sample size}
\end{align}
Similar to the previous one-sample case, we split $\{\bbx_{j_1}:j_1=1,\cdots,M_1\}$ and $\{\bbz_{j_2}:j_2=1,\cdots,M_2\}$ into two subsamples, respectively. Precisely, let $N_0:=N_0(M_1)\in\mbN^+$ such that $N_0<\min\{M_1,M_2\},\lim_{M_1\to\infty}N_0=\infty$ and
\begin{align}
    \lim_{M_1\to\infty}(M_1-N_0)/M_1=r_1\in(0,1)\quad\text{and}\quad\lim_{M_2\to\infty}(M_2-N_0)/M_2=r_2\in(0,1).\label{Eq of size split data two sample}
\end{align}
For any constant $d_0>0$, consider testing the hypotheses 
$$H_{0,\delta}:\Vert\bbmu_1-\bbmu_2\Vert_2> d_0\quad\text{versus}\quad H_{1,\delta}:\Vert\bbmu_1-\bbmu_2\Vert_2\leq d_0.$$
And we specify the following high-dimensional regime:
\begin{align}
    \lim_{M_1\to\infty}n/M_1=c_1\in[0,\infty).\label{Eq of high dimensionality two sample}
\end{align}
First, let $m_i:=M_i-N_0$ for $i=1,2$ and
$$\Delta_0:=\frac{1}{m_1}\sum_{j=1}^{m_1}\bbx_j-\frac{1}{m_2}\sum_{j=1}^{m_2}\bbz_j,$$
and
\begin{align*}
    \left\{\begin{array}{l}
        \displaystyle\Delta_i:=\bbx_{m_1+i}-\bbz_{m_2+i},\quad\text{and}\quad Y_i:=\Delta_0'\Delta_i-d_0^2\quad\text{for}\quad i=1,\cdots,N_0,\\[6pt]
        \displaystyle\hat{\tau}_{1,\delta}:=\frac{1}{N_0}\sum_{i=1}^{N_0}Y_i\quad\text{and}\quad\hat{\sigma}_{1,\delta}^2:=\frac{1}{N_0}\sum_{i=1}^{N_0}Y_i^2-\hat{\tau}_{1,\delta}^2.
    \end{array}\right.
\end{align*}
Next, define
\begin{align}
    \mcM_{j,N_0}^{\delta}=\frac{1}{\sqrt{N_0}}\sum_{i=1}^j\frac{\theta_iY_i}{\sqrt{\hat{\tau}_{1,\delta}^2+\hat{\sigma}_{1,\delta}^2}},\quad j=1,\cdots,N_0,\label{Eq of statistic two sample norm test}
\end{align}
where $\theta_1=1$ and for $j=2,\cdots,N_0$
\begin{align*}
    \theta_j=\left\{\begin{array}{cl}
        1, & \text{if }\ \mcM_{j-1,N_0}^{\delta}\leq0, \\[6pt]
        -1, & \text{if }\ \mcM_{j-1,N_0}^{\delta}>0,
    \end{array}\right.
\end{align*}
The following data generation process parallels that of one-sample case in Assumption \ref{Ap of noise}.
\begin{Ap}\label{Ap of noise two sample}
    Let $\bbx_{j_1}=\bbmu_1+\bbGa_1\bby_{j_1}$ and $\bbz_{j_2}=\bbmu_2+\bbGa_2\bbw_{j_2}$ such that for $i=1,2$, $\bbGa_i\bbGa_i'=\bbSig_i$ and $\Vert\bbmu_i\Vert_2,\Vert\bbSig_i\Vert\leq\mrO(1)$ and $\liminf_{M_i\to\infty}\frac{1}{M_i}\tr(\bbSig_i)>0$. Moreover, $\bby_{j_1}=(y_{1,j_1},\cdots,y_{m_1,j_1})'\in\mbR^{m_1}$, where all $y_{i,j_1}$ are independent random variables such that $\mbE[y_{i,j_1}]=0,\mbE[y_{i,j_1}^2]=1$ and $\sup_{i,j_1}\mbE[y_{i,j_1}^4]<\infty$ for $i=1,\cdots,m_1$ and $j_1=1,\cdots,T$. Similar conditions are also assumed on the vectors $\bbw_{j_2}$. 
\end{Ap}
\begin{thm}\label{Thm of two sample norm test}
    Under \eqref{Eq of two sample size}, \eqref{Eq of size split data two sample}, \eqref{Eq of high dimensionality two sample} and Assumption \ref{Ap of noise two sample}, let
    $$\left\{\begin{array}{l}
    	\tau_{1,\delta}=\Vert\bbmu_1-\bbmu_2\Vert^2-d_0^2,\\
    	\sigma_{1,\delta}^2=(\bbmu_1-\bbmu_2)'(\bbSig_1+\bbSig_2)(\bbmu_1-\bbmu_2)+\tr(\{\bbSig_1+\bbSig_2\}\{\bbSig_1/m_1+\bbSig_2/m_2\}),
    \end{array}\right.$$
    Then as $N_0\to\infty$,
    $$\mcM_{N_0,N_0}^{\delta}\overset{d}{\sim}\mcD\mcB(\kappa_{1,N_0}^{\delta},\beta_{1,N_0}^{\delta})$$
    with
    \begin{align*}
    	\kappa_{1,N_0}^{\delta}:=-\tau_{1,\delta}\sqrt{\frac{N_0}{\tau_{1,\delta}^2+\sigma_{1,\delta}^2}}\quad\text{and}\quad\beta_{1,N_0}^{\delta}:=\frac{N_0(m_1^{-1}\bbmu_{\delta}'\bbSig_1\bbmu_{\delta}+m_2^{-1}\bbmu_{\delta}'\bbSig_2\bbmu_{\delta})}{\tau_{1,\delta}^2+\sigma_{1,\delta}^2},
    \end{align*}
    where we require that $\lim_{M_1,M_2\to\infty}N_0/\max\{m_1,m_2\}\in(0,1/2)$.
\end{thm}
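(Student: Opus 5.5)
The plan is to reduce Theorem \ref{Thm of two sample norm test} to the one-sample Theorem \ref{Thm of strategic CLT} by matching structures. Write $\bbmu_\delta:=\bbmu_1-\bbmu_2$. The first block $\Delta_0=\bbmu_\delta+\bar{\boldsymbol{e}}$ is built from the first $m_1$ and $m_2$ observations, with $\bar{\boldsymbol{e}}$ centered and covariance $\bbSig_1/m_1+\bbSig_2/m_2$. The second block $\Delta_i=\bbmu_\delta+\boldsymbol{e}_i$, $i=1,\dots,N_0$, consists of i.i.d. vectors that are independent of $\Delta_0$ and have covariance $\bbSig_1+\bbSig_2$. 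So $Y_i=\Delta_0'\Delta_i-d_0^2$ plays exactly the role of $X_t$ in \eqref{Eq of estimators}: the average $\frac1{T_1}\sum_s\bbx_s$ is replaced by $\Delta_0$, and $\bbx_{t+T_1}$ by $\Delta_i$.

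\textbf{Step 1: conditional moments.} Conditionally on $\mcF_0:=\sigma\{\Delta_0\}$, the $Y_i$ are i.i.d. with
$$\mbE[Y_i\mid\mcF_0]=\Delta_0'\bbmu_\delta-d_0^2,\qquad \Var(Y_i\mid\mcF_0)=\Delta_0'(\bbSig_1+\bbSig_2)\Delta_0 .$$
Using Assumption \ref{Ap of noise two sample} and standard quadratic-form moment bounds (the finite fourth moments and the trace lemma), I would show the following:
\begin{itemize}
\item $\Delta_0'\bbmu_\delta=\|\bbmu_\delta\|_2^2+\bar{\boldsymbol{e}}'\bbmu_\delta$, where $\bar{\boldsymbol{e}}'\bbmu_\delta$ is an $\mrO_\mbP(N_0^{-1/2})$ Gaussian-like fluctuation with variance $m_1^{-1}\bbmu_\delta'\bbSig_1\bbmu_\delta+m_2^{-1}\bbmu_\delta'\bbSig_2\bbmu_\delta$.
\item $\Delta_0'(\bbSig_1+\bbSig_2)\Delta_0\overset{\mbP}{\longrightarrow}$ the quantity $\sigma_{1,\delta}^2$ in relative terms.
\end{itemize}
Consequently $\hat\tau_{1,\delta}^2+\hat\sigma_{1,\delta}^2$ is ratio-consistent for $\tau_{1,\delta}^2+\sigma_{1,\delta}^2$, and this normalization can be replaced by its deterministic limit.

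\textbf{Step 2: the fluctuation.} The random shift $\sqrt{N_0}\,\bar{\boldsymbol{e}}'\bbmu_\delta/\sqrt{\tau_{1,\delta}^2+\sigma_{1,\delta}^2}$ does not vanish. Its variance ratio is precisely $\beta_{1,N_0}^{\delta}$, and this is what generates the $V_s$ component and the state-dependent volatility in \eqref{Eq of 2-dim SDE}.

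To see this, I would decompose $Y_i=\tau_{1,\delta}+\bbmu_\delta'\boldsymbol{e}_i+\bar{\boldsymbol{e}}'\bbmu_\delta+\bar{\boldsymbol{e}}'\boldsymbol{e}_i$ and follow the one-sample proof line by line. In the one-sample case, $X_t=\tau_1+\bbmu'\bbGa\bby_{t+T_1}+\bar{\boldsymbol{e}}_1'\bbmu+\bar{\boldsymbol{e}}_1'\bbGa\bby_{t+T_1}$ with $\bar{\boldsymbol{e}}_1=\frac1{T_1}\sum_{s\le T_1}\bbGa\bby_s$. The only changes are the substitutions
$$\bbSig\mapsto\bbSig_1+\bbSig_2 \ \text{(innovation covariance)},\qquad \bbSig/T_1\mapsto\bbSig_1/m_1+\bbSig_2/m_2 \ \text{(first-block covariance)}.$$
These substitutions are exactly the ones that turn $\tau_1,\sigma_1^2,\kappa_{1,T_2},\beta_{1,T_2}$ into $\tau_{1,\delta},\sigma_{1,\delta}^2,\kappa_{1,N_0}^\delta,\beta_{1,N_0}^\delta$. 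The machinery in the proof of Theorem \ref{Thm of strategic CLT} takes as input only these moment structures and the fourth-moment bounds, so it applies verbatim.

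I would then invoke the convergence argument behind Theorem \ref{Thm of strategic CLT}, namely the backward Kolmogorov equation for \eqref{Eq of 2-dim SDE} developed in \S\ref{App of SDE}. This is a telescoping Lindeberg-type comparison
$$\mbE\,\varphi(\mcM^\delta_{N_0,N_0})-\mbE\,\varphi(Y_1)=\sum_j\Big(\mbE\, u\big(t_j,\mcM^\delta_{j,N_0},\cdot\big)-\mbE\, u\big(t_{j-1},\mcM^\delta_{j-1,N_0},\cdot\big)\Big),$$
with $u$ the solution of the Kolmogorov PDE. Each increment is Taylor-expanded to second order, using the feedback sign $\theta_j$ to match the drift term $\kappa\,\sign(Y)$ and the conditional variance to match $1-2\beta\,\sign(Y)V$.

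The requirement $N_0/\max\{m_1,m_2\}<1/2$ guarantees $|\beta_{1,N_0}^\delta|<1/2$, so the limit law is well defined.

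\textbf{Main obstacle.} The hard step is the correlation between $\mcM^\delta_{j-1,N_0}$ and the common term $\bar{\boldsymbol{e}}'\bbmu_\delta$ shared by all $Y_i$. I would handle it as in the one-sample appendix. Enlarge the state to include the running signed sum $V_j:=-N_0^{-1}\sum_{i\le j}\theta_i$. The accumulated common shift then equals $\sqrt{N_0}\,\bar{\boldsymbol{e}}'\bbmu_\delta\cdot(-V_j)$ after normalization. Conditioning on the first block and integrating out this Gaussian shift yields the volatility correction $-2\beta\,\sign(Y)V$.

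The remaining term $\bar{\boldsymbol{e}}'\boldsymbol{e}_i$ contributes only to the variance $\tr(\{\bbSig_1+\bbSig_2\}\{\bbSig_1/m_1+\bbSig_2/m_2\})$. Its cross-$i$ dependence through $\bar{\boldsymbol{e}}$ is of order $\mrO(n/m^2)$ per pair, which is negligible after summing under \eqref{Eq of high dimensionality two sample}.
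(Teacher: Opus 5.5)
Your plan follows the paper's own route. The paper simply declares the two-sample proof identical to the one-sample one: Lindeberg telescoping against $h(x,v,t)=\mbE[g(Y_1^{x,v,t},V_1^{v,t})]$, with the $\beta$-term produced by a cumulant (Gaussian-type) integration by parts in the first-block variables acting on the $\mcS_{t-1}$-dependence of $\mcM_{t-1}$. Your substitutions $\bbSig\mapsto\bbSig_1+\bbSig_2$ and $\bbSig/T_1\mapsto\bbSig_1/m_1+\bbSig_2/m_2$ correctly reproduce $\tau_{1,\delta}$, $\sigma_{1,\delta}^2$, $\kappa_{1,N_0}^{\delta}$ and $\beta_{1,N_0}^{\delta}$.

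Two small slips need fixing:
\begin{enumerate}
\item The state variable that matches $V$ in \eqref{Eq of 2-dim SDE} is $+N_0^{-1}\sum_{i\le j}\theta_i$, i.e.\ the paper's $\mcS_j$, not its negative. With your sign convention the volatility correction would read $+2\beta\,\sign(Y)V_j$ rather than $-2\beta\,\sign(Y)V$.
\item The bound $\sigma_{1,\delta}^2\ge\bbmu_\delta'(\bbSig_1+\bbSig_2)\bbmu_\delta$ only controls $\beta_{1,N_0}^{\delta}$ by a weighted average of $N_0/m_1$ and $N_0/m_2$. So it is $N_0/\min\{m_1,m_2\}<1/2$, not the stated $\max$ condition, that guarantees $|\beta_{1,N_0}^{\delta}|<1/2$.
\end{enumerate}
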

The proof of Theorem \ref{Thm of two sample norm test} is the same as the proof of Theorem \ref{Thm of strategic CLT}, thus omitted. Finally, given a significance level $\alpha\in(0,1)$, let $z_{\alpha/2}$ be the upper $\alpha/2$-quantile of $\mcD\mcB(0,\beta_{1,N_0}^{\delta})$, we reject $H_{0,\delta}$ if  
\begin{align*}
    \big|\mcM_{N_0,N_0}^{\delta}\big|>z_{\alpha/2}.
\end{align*}
Asymptotic results of this test parallel previous ones in the one-sample case. In particular, and the asymptotic size of $\mcM_{N_0,N_0}^{\delta}$ is upper-bounded by $\alpha$, and under $H_{1,\delta}$, if we suppose $\liminf_{N_0\to\infty}\tau_{1,\delta}^2N_0=\infty$, the asymptotic power of $\mcM_{N_0,N_0}^{\delta}$ is 1.

\section{Numerical experiments}\label{Sec of numerical experiment}
We conduct two numerical experiments to verify the finite-sample performance of our norm test \eqref{Eq of high dimensional mean test} in the one sample case, respectively. Here, we set the significance level $\alpha=0.05$ and $\bbx_t\overset{i.i.d.}{\sim}\mcN(\bbmu,\bbSig)$. Moreover, we split the data such that $T_1=2T_2$. Results for the two-sample case are highly expected to be similar thus not considered.

Let $\bbmu=(n^{-1/2},\cdots,n^{-1/2})'\in\mbR^n$ be the population mean vector with $\Vert\bbmu\Vert_2=1$ and $\bbSig=[0.5^{-|i-j|}]_{i,j}\in\mbR^{n\times n}$ be the population covariance matrix. Under different $(n,T)$ and $d_0=0.5+0.1l$ for $l=0,1,\cdots,10$, we compute the empirical values of $\mbP\big(|\mcM_{T_2,T_2}|> z_{\alpha/2}\big)$ based on 100 independent samples, all results are summarized in Table \ref{Tab of norm}.
\begin{table}[ht]
    \centering
    \caption{Empirical values of $\mbP\big(|\mcM_{T_2,T_2}|\geq z_{\alpha/2}\big)$ under significance level of $\alpha=0.05$. The null hypothesis is ``$d_0<1$'' and the alternative hypothesis is ``$d_0\geq1$''.}
    \label{Tab of norm}
    \renewcommand{\arraystretch}{1.2}
    \begin{tabular}{@{\extracolsep{\fill}}|c|c|c|c|c|c|c|c|c|c|c|c|}
        \hline
        \backslashbox{$(n,T)$}{$d_0$} & $0.5$ & $0.6$ & $0.7$ & $0.8$ & 0.9 & 1.0 & 1.1 & 1.2 & 1.3 & 1.4 & 1.5 \\\hline
        $(100,200)$ & 0 & 0 & 0 & 0.01 & 0.05 & 0.16 & 0.42 & 0.71 & 0.90 & 0.98 & 1 \\
        $(200,400)$ & 0 & 0 & 0 & 0 & 0.01 & 0.14 & 0.52 & 0.85 & 0.98 & 1 & 1 \\
        $(400,800)$ & 0 & 0 & 0 & 0 & 0.01 & 0.13 & 0.71 & 0.96 & 1 & 1 & 1 \\
        $(600,1200)$ & 0 & 0 & 0 & 0 & 0 & 0.09 & 0.75 & 0.99 & 1 & 1 & 1 \\\hline
        $(100,150)$ & 0 & 0 & 0.01 & 0.03 & 0.06 & 0.17 & 0.34 & 0.59 & 0.77 & 0.92 & 0.98 \\
        $(200,300)$ & 0 & 0 & 0 & 0 & 0.02 & 0.16 & 0.47 & 0.8 & 0.97 & 1 & 1 \\
        $(400,600)$ & 0 & 0 & 0 & 0 & 0.01 & 0.15 & 0.53 & 0.93 & 1 & 1 & 1 \\
        $(600,900)$ & 0 & 0 & 0 & 0 & 0.01 & 0.09 & 0.7 & 0.96 & 1 & 1 & 1 \\\hline
    \end{tabular}
\end{table}

Recall our discussions about the asymptotic size and power of $\mcM_{T_2,T_2}$ in the end of \S\ref{Sec of one sample}, since 
\begin{align*}
    \lim_{T\to\infty}\big|\mbP(|\mcM_{T_2,T_2}|>z_{\alpha/2})-\mbP(|\mcD\mcB(\kappa_{1,T_2},\beta_{1,T_2})|>z_{\alpha/2})\big|=0,
\end{align*}
when $d_0>\Vert\bbmu\Vert_2=1$, we have $\kappa_{1,T_2}=\mrO(\sqrt{T})$, so the empirical values of $\mbP\big(|\mcM_{T_2,T_2}|> z_{\alpha/2}\big)$ increasingly tend to 1 as $(n,T)$ tend to infinity proportionally. On the other hand, when $d_0<\Vert\bbmu\Vert_2=1$, then $\kappa_{1,T_2}=\mrO(-\sqrt{T})$ and the empirical values of $\mbP\big(|\mcM_{T_2,T_2}|> z_{\alpha/2}\big)$ is decreasing and asymptotically smaller than the significance level $\alpha=0.05$ as $(n,T)$ tend to infinity proportionally. These results well confirm the theoretical findings for the norm test in \S\ref{sec of asymptotic distribution}.

\section{Empirical study}\label{Sec of empirical}
We exam an intestinal microbiota dataset in \cite{lahti2014tipping}. This dataset includes 1006 samples and each sample contains 130 genus-like phylogenetic groups that cover the majority of the known bacterial
diversity of the human intestine. Based on this dataset, \cite{zhang2020distance} and \cite{jiang2025testing} have investigated whether the population mean of microbiome compositions differs between the younger age group and the elder age group. Both of these two papers conclude that this difference is significant but provide no information on the magnitude of this difference. Clearly, it is important to quantify this difference of microbiome compositions between the younger group and the elder group. Thus, we reanalyze this dataset based on our two-sample deviation tests developed in \S\ref{Sec of two sample}.

Here, we select individuals aged 35 to 50 as the elder group (313 samples in total) and those under 35 as
the younger group (303 samples in total), and consider the logarithm of the raw data. Let $\bbmu_1$ and $\bbmu_2$ be the population mean vectors of the younger and the elder group, respectively. We then consider the following test: 
\begin{align*}
    H_0:\Vert\bbmu_1-\bbmu_2\Vert_2> d_0\quad\text{versus}\quad H_1:\Vert\bbmu_1-\bbmu_2\Vert_2\leq d_0.
\end{align*}
Let $d_0=1.4+0.02i$ for $i=0,1,\cdots,10$, we apply our two-sample deviation test in \S\ref{Sec of two sample}, choosing $N_0=100$, then compute the statistic $|\mcM_{N_0,N_0}^{\delta}|$ in \eqref{Eq of statistic two sample norm test} under different values of $d_0$. The obtained values of the test statistic and corresponding p-values are given in Table \ref{Tab of empirical study norm}.
\begin{table}[ht]
    \centering
    \caption{Values of $|\mcM_{N_0,N_0}^{\delta}|$ under different $d_0$.}
    \label{Tab of empirical study norm}
    \renewcommand{\arraystretch}{1.2}
    \begin{tabular}{|c|ccccccccccc|}
        \hline
        $d_0$ & 1.4 & 1.42 & 1.44 & 1.46 & 1.48 & 1.5 & 1.52 & 1.54 & 1.56 & 1.58 & 1.6 \\\hline
        $\big|\mcM_{N_0,N_0}^{\delta}\big|$ & 1.61 & 1.67 & 1.74 & 1.82 & 1.91 & 2 & 2.07 & 2.12 & 2.19 & 2.27 & 2.34 \\\hline
    \end{tabular}
\end{table}
If we set the significance level $\alpha=0.05$, then the critical value is $z_{\alpha/2}=1.96$. According to Table \ref{Tab of empirical study norm}, when $d_0\geq1.5$, we will reject $\Vert\bbmu_1-\bbmu_2\Vert_2>d_0$. Following the theory developed in \S\ref{Sec of two sample}, we conclude that $\Vert\bbmu_1-\bbmu_2\Vert_2$ is nearly 1.5 at risk level of 0.05.

In summary, according to results in Table \ref{Tab of empirical study norm}, under the significance level $\alpha=0.05$, the our test suggests the deviation of microbiome
compositions between these two groups is no more than 1.5. From a practical perspective, compared with previous empirical studies \cite{zhang2020distance} and \cite{jiang2025testing}, our methods indeed provide concretely quantitative characteristic about the bacterial  difference between these two groups, and such quantitative information can provide more insights to further assist the biological research.

\appendix

\section{Appendix \texorpdfstring{\RomanNumeralCaps{1}}{I}}\label{App}
We prove Theorem \ref{Thm of strategic CLT} via following steps:
\begin{itemize}
    \item In \S\ref{sec of Auxiliary lemmas}, for variables $\hat{\tau}_1$ and $\hat{\sigma}_1$ defined in \eqref{Eq of estimators}, we show that they are consistent estimators of $\tau_1$ and $\sigma_1$ in Theorem \ref{Thm of strategic CLT}, respectively.
    \item In \S\ref{App of SDE}, we investigate the properties of the SDE \eqref{Eq of 2-dim SDE}.
    \item The whole proof of Theorem \ref{Thm of strategic CLT} is provided in \S\ref{sec of proof norm test}.
\end{itemize}

\subsection{Auxiliary lemmas}\label{sec of Auxiliary lemmas}
\begin{lem}\label{Lem of Consistence}
    Under Assumptions \ref{Ap of high dimensionality} and \ref{Ap of noise}, we have
    \begin{align}
        &\mbE[(\hat{\tau}_1-\tau_1)^4]\leq\mrO(T^{-2})\quad\text{and}\quad\mbE[(\hat{\sigma}_1^2-\sigma_1^2)^2]\leq\mrO(T^{-1}),\notag
    \end{align}
    where $\hat{\tau}_1,\tau_1$ and $\hat{\sigma}_1,\sigma_1$ are defined in \eqref{Eq of estimators} and Theorem \ref{Thm of strategic CLT}, respectively.
\end{lem}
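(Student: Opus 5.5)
Set $\bar{\bby}:=\frac{1}{T_1}\sum_{s=1}^{T_1}\bby_s$ and $\bbw:=\bbmu+\bbGa\bar{\bby}$, so that $\frac{1}{T_1}\sum_{s\le T_1}\bbx_s=\bbw$. Then for every $t$,
\begin{align*}
X_t=\bbw'\bbx_{t+T_1}-d_0^2=\bbw'\bbmu-d_0^2+\bbw'\bbGa\bby_{t+T_1}.
\end{align*}
Conditionally on $\mcF_1:=\sigma\{\bbx_1,\dots,\bbx_{T_1}\}$, the $X_t$ are i.i.d. with conditional mean $a:=\bbw'\bbmu-d_0^2$ and conditional variance $b:=\bbw'\bbSig\bbw$.

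Both estimators split into a first-half fluctuation and a conditional-LLN fluctuation. Write $\hat{\tau}_1=a+\bar{e}$ with $e_t:=\bbw'\bbGa\bby_{t+T_1}$ and $\bar e:=T_2^{-1}\sum_t e_t$, and note $a-\tau_1=\bbmu'\bbGa\bar{\bby}$. We have $\mbE[(\bbmu'\bbGa\bar{\bby})^4]\le \mrO(T^{-2})$ by the Rosenthal/fourth-moment bound for a linear form in independent coordinates, using $\Vert\bbGa'\bbmu\Vert_2=\mrO(1)$ and $\kappa_4<\infty$. For $\bar e$, condition on $\mcF_1$: $\mbE[\bar e^4|\mcF_1]\le C T_2^{-2}(b^2+\mbE[e_t^4|\mcF_1])\le C' T_2^{-2}b^2$, since for a linear form $\mbE[(\bbv'\bby)^4]\le(3+\kappa_4)\Vert\bbv\Vert_2^4$ with $\bbv=\bbGa'\bbw$. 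It then remains to check $\mbE[b^2]=\mrO(1)$. Here $b\le\Vert\bbSig\Vert\,\Vert\bbw\Vert_2^2$ and $\Vert\bbw\Vert_2^2\le 2\Vert\bbmu\Vert_2^2+2\Vert\bbGa\bar{\bby}\Vert_2^2$. The quadratic form $\bar{\bby}'\bbSig\bar{\bby}$ has mean $\tr(\bbSig)/T_1=\mrO(n/T)=\mrO(1)$ by Assumption \ref{Ap of high dimensionality}, and its variance is $\mrO(\tr(\bbSig^2)/T_1^2)=\mrO(1)$ by the standard quadratic-form moment formula with $\kappa_4$. Hence $\mbE[b^2]=\mrO(1)$, and combining gives $\mbE[(\hat\tau_1-\tau_1)^4]\le\mrO(T^{-2})$.

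For the variance estimator, $\hat\sigma_1^2=T_2^{-1}\sum_t(e_t-\bar e)^2=T_2^{-1}\sum_t e_t^2-\bar e^2$. First, $\mbE[\bar e^4]^{1/2}=\mrO(T^{-1})$ from the step above, so the $\bar e^2$ term contributes $\mrO(T^{-2})$ in $L^2$. Second, $T_2^{-1}\sum_t(e_t^2-b)$ has conditional second moment at most $T_2^{-1}\mbE[e_t^4|\mcF_1]\le C T_2^{-1}b^2$, which is $\mrO(T^{-1})$ after taking expectations. What remains is $\mbE[(b-\sigma_1^2)^2]$. Expand
\begin{align*}
b=\bbmu'\bbSig\bbmu+2\bbmu'\bbSig\bbGa\bar{\bby}+\bar{\bby}'\bbGa'\bbSig\bbGa\bar{\bby}.
\end{align*}
The cross term has second moment $\mrO(T^{-1})$. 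The quadratic term has mean exactly $\tr(\bbSig^2)/T_1$, since $\mbE[\bar{\bby}\bar{\bby}']=I/T_1$; this is precisely why $\sigma_1^2$ carries the term $\tr(\bbSig^2)/T_1$. Its variance is
\begin{align*}
\mrO\big(\tr((\bbGa'\bbSig\bbGa)^2)/T_1^2\big)+\text{fourth-cumulant terms}=\mrO(n/T^2)=\mrO(T^{-1}),
\end{align*}
where one writes $\bar{\bby}=T_1^{-1}\sum_s\bby_s$ and checks the fourth-order cross moments between distinct $s$, which only produce pairings and so stay at this order. Therefore $\mbE[(\hat\sigma_1^2-\sigma_1^2)^2]\le\mrO(T^{-1})$.

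The main obstacle is this last variance computation for the quadratic form in the averaged vector $\bar{\bby}$, whose coordinates are not independent across the entries of $\bby_s$ in a simple way once averaged. I would handle it by expanding $\bar{\bby}'\mathbf{A}\bar{\bby}=T_1^{-2}\sum_{s,s'}\bby_s'\mathbf{A}\bby_{s'}$ with $\mathbf{A}=\bbGa'\bbSig\bbGa$. The diagonal terms $s=s'$ are handled by the classical formula $\Var(\bby'\mathbf{A}\bby)\le(2+\kappa_4)\tr(\mathbf{A}^2)$. The off-diagonal terms are mean-zero and mutually orthogonal up to the symmetric pairing, and have second moment $\tr(\mathbf{A}^2)$ each. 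The total is therefore $\mrO(\tr(\mathbf{A}^2)/T_1^2)=\mrO(n/T^2)$.
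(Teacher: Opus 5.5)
Your proof is correct and is essentially the paper's argument. Your $e_t$ and $a-\tau_1$ are exactly the paper's $A_t$ and $\mfv$. Conditioning on $\mcF_1$ makes transparent the paper's ``direct calculation'' $\mbE[A_{t_1}^2A_{t_2}^2]=\sigma_1^4+\mrO(T^{-1})$, which is just $\mbE[b^2]-\sigma_1^4=\Var(b)$. Using the shift-invariance of $\hat\sigma_1^2$ also lets you skip the paper's cross terms in $\mfv$. One small correction: the quadratic form controlling $\Vert\bbGa\bar{\bby}\Vert_2^2$ is $\bar{\bby}'\bbGa'\bbGa\bar{\bby}$, not $\bar{\bby}'\bbSig\bar{\bby}$. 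Also, the coordinates of $\bar{\bby}$ are in fact independent, so the classical quadratic-form variance bound applies directly to $\sqrt{T_1}\,\bar{\bby}$ without your $s,s'$ expansion.
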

\begin{proof}
    Let's first show that $\mbE[(\hat{\tau}_1-\tau_1)^4]\leq\mrO(T^{-2})$. Note that
    \begin{align*}
        X_t-\tau_1=\underbrace{\bbmu'\bbGa\bby_{t+T_1}+\frac{1}{T_1}\sum_{s=1}^{T_1}\bby_s'\bbGa'\bbGa\bby_{t+T_1}}_{:=A_t}+\underbrace{\frac{1}{T_1}\sum_{s=1}^{T_1}\bbmu'\bbGa\bby_s}_{:=\mfv},
    \end{align*}
    so $\hat{\tau}_1-\tau_1=\frac{1}{T_2}\sum_{t=1}^{T_2}A_t+\mfv$. By direct calculations and Assumption \ref{Ap of noise}, we have $\mbE[\mfv^4]\leq\mrO(T^{-2}\bbmu'\bbSig\bbmu+\kappa_4T^{-3}\sum_{j=1}^m\{\bbmu\Gamma_{\cdot j}\}^4)\leq\mrO(1)$. By Minkowski inequality, we obtain that
    \begin{align*}
        \mbE[(\hat{\tau}_1-\tau_1)^4]\leq8\mbE[\mfv^4]+8T_2^{-4}\mbE\left[\left(\sum_{t=1}^{T_2}A_t\right)^4\right]\leq8T_2^{-4}\sum_{t_1,t_2=1}^T\mbE[A_{t_1}^2A_{t_2}^2]+\mrO(T^{-2}).
    \end{align*}
    Here, let's prove that $\mbE[A_t^4]\leq\mrO(1)$. For simplicity, let
    $$\fy_{T_1}:=\frac{1}{T_1}\sum_{s=1}^{T_1}\bby_s.$$
    By Minkowski inequality, we have
    $$\mbE[A_t^4]\leq8\mbE[(\bbmu'\bbGa\bby_{t+T_1})^4]+8\mbE[(\fy_{T_1}'\bbGa'\bbGa\bby_{t+T_1})^4].$$
    Since $\mbE[(\bbmu'\bbGa\bby_{t+T_1})^4]\leq\mrO(1)$ by Assumption \ref{Ap of noise}, it suffices to show that $\mbE[(\fy_{T_1}'\bbGa'\bbGa\bby_{t+T_1})^4]\leq\mrO(1)$. Denote $\bbB:=\bbGa'\bbGa\fy_{T_1}\fy_{T_1}'\bbGa'\bbGa$, since $\fy_{T_1}$ and $\bby_{t+T_1}$ are independent, then $\bbB$ and $\bby_{t+T_1}$ are independent, by direct calculations, we have
    \begin{align*}
        &\mbE[(\fy_{T_1}'\bbGa'\bbGa\bby_{t+T_1})^4]=\mbE[(\bby_{t+T_1}'\bbB\bby_{t+T_1})^2]\\
        &=\sum_{j_1\cdots j_4=1}^m\mbE[B_{j_1,j_2}B_{j_3,j_4}]\mbE[y_{j_1,t+T_1}y_{j_2,t+T_1}y_{j_3,t+T_1}y_{j_4,t+T_1}]\\
        &=\mbE[\tr(\bbB)^2]+2\mbE[\tr(\bbB^2)]+\sum_{j=1}^m\mbE[B_{j,j}^2](\mbE[y_{j;t+T_1}^4]-3),
    \end{align*}
    where
    \begin{align}
        \begin{array}{l}
            \displaystyle\mbE[\tr(\bbB)^2]=T_1^{-2}\tr(\bbSig^2)^2+\sum_{j=1}^m(\Gamma_{\cdot j}'\bbSig\Gamma_{\cdot j})^2(\mbE[\fy_{j,T_1}^4]-T_1^{-2}), \\
            \displaystyle\mbE[\tr(\bbB^2)]=T_1^{-2}\tr(\bbSig^2)^2+2T_1^{-2}\tr(\bbSig^4)+\sum_{j=1}^m(\Gamma_{\cdot j}'\bbSig\Gamma_{\cdot j})^2(\mbE[\fy_{j,T_1}^4]-3T_1^{-2}).
        \end{array}\label{Eq of Variance trick}
    \end{align}
    By Assumption \ref{Ap of noise}, $\mbE[y_{j;t+T_1}^4]-3\leq|\kappa_4|<\infty$, then $\big|\mbE[\fy_{j,T_1}^4]-3T_1^{-2}\big|\leq|\kappa_4|T_1^{-2}$. This implies that $\mbE[\tr(\bbB)^2],\mbE[\tr(\bbB^2)]\leq\mrO(1)$. Similarly, since $\mbE[B_{j,j}^2]=\mbE[(\Gamma_{\cdot j}'\bbGa\fy_{T_1})^4]\leq\mrO(T_1^{-2})$, then $\big|\sum_{j=1}^m\mbE[B_{j,j}^2](\mbE[y_{j;t+T_1}^4]-3)\big|\leq\mrO(T^{-1})$ and $\mbE[A_t^4]\leq\mrO(1)$. Consequently, we have
    \begin{align*}
        \mbE[(\hat{\tau}_1-\tau_1)^4]\leq8T_2^{-4}\sum_{t_1,t_2=1}^T\mbE[A_{t_1}^2A_{t_2}^2]+\mrO(T^{-2})\leq\mrO(T^{-2}).
    \end{align*}
    Next, let's prove $\mbE[(\hat{\sigma}_1^2-\sigma_1^2)^2]\leq\mrO(T^{-1})$. Note that
    \begin{align*}
        \hat{\sigma}_1^2-\sigma_1^2=\frac{1}{T_2}\sum_{t=1}^{T_2}(X_t-\tau_1)^2-\sigma_1^2+\tau_1^2-\hat{\tau}_1^2,
    \end{align*}
    where $\mbE[(\tau_1^2-\hat{\tau}_1^2)^2]\leq\mbE[(\hat{\tau}_1-\tau_1)^4]^{1/2}\mbE[(\hat{\tau}_1+\tau_1)^4]^{1/2}\leq\mrO(T^{-1})$, so it suffices to show that
    $$\frac{1}{T_2^2}\mbE\left[\left(\sum_{t=1}^{T_2}(X_t-\tau_1)^2-\sigma_1^2\right)^2\right]=\frac{1}{T_2^2}\sum_{t_1,t_2=1}^{T_2}\mbE[(X_{t_1}-\tau_1)^2(X_{t_2}-\tau_1)^2]-\sigma_1^4\leq\mrO(T^{-1/2}).$$
    Since $(X_t-\tau_1)^2=A_t^2+2A_t\mfv+\mfv^2$ and $\mbE[A_t^4]\leq\mrO(1)$, for $t_1\neq t_2$,
    \begin{align*}
        \mbE[A_{t_1}\mfv^3]=\mbE[A_{t_2}\mfv^3]=\mbE[A_{t_1}^2A_{t_2}\mfv]=\mbE[A_{t_1}A_{t_2}^2\mfv]=\mbE[A_{t_1}A_{t_2}\mfv^2]=0.
    \end{align*}
    By Cauchy–Schwarz inequality, we have $\mbE[A_{t_1}^2\mfv^2]\leq\mbE[A_{t_1}^2]^{1/2}\mbE[\mfv^2]^{1/2}\leq\mrO(T^{-1})$. Similarly, $\mbE[A_{t_2}^2\mfv^2]\leq\mrO(T^{-1})$. Therefore,
    \begin{align*}
        \frac{1}{T_2^2}\sum_{t_1\neq t_2}^{T_2}\mbE[(X_{t_1}-\tau_1)^2(X_{t_2}-\tau_1)^2]=\frac{1}{T_2^2}\sum_{t_1\neq t_2}^{T_2}\mbE[A_{t_1}^2A_{t_2}^2]+\mrO(T^{-1}).
    \end{align*}
    Moreover, by direct calculations, we have
    \begin{align*}
        &\mbE[A_{t_1}^2A_{t_2}^2]=\sigma_1^4+\mrO(T^{-1}),\quad t_1\neq t_2.
    \end{align*}
    Consequently,
    \begin{align*}
        &\frac{1}{T_2^2}\sum_{t_1\neq t_2}^{T_2}(\mbE[(X_{t_1}-\tau_1)^2(X_{t_2}-\tau_1)^2]-\sigma_1^4)=\frac{1}{T_2^2}\sum_{t_1\neq t_2}^{T_2}(\mbE[A_{t_1}^2A_{t_2}^2]-\sigma_1^4)+\mrO(T^{-1})\leq\mrO(T^{-1}).
    \end{align*}
    Finally, $\mbE[(X_t-\tau_1)^4]\leq8\mbE[A_t^4]+8\mbE[\mfv^4]\leq\mrO(1)$, so $T_2^{-2}\sum_{t=1}^{T_2}\mbE[(X_t-\tau_1)^4]-\sigma_1^4\leq\mrO(1)$, which completes the proof of $\mbE[(\hat{\sigma}_1^2-\sigma_1^2)^2]\leq\mrO(T^{-1})$.
\end{proof}

\subsection{Properties of SDE}\label{App of SDE}
In this part, we first show that the following SDE admits a unique weak solution:
\begin{align}
	\left\{\begin{array}{l}
		dY_s^{x,v,t}=\kappa\sign(Y_s^{x,v,t})ds+(1-2\beta\sign(Y_s^{x,v,t})V_s^{v,t})^{1/2}dB_s, \\
		dV_s^{v,t}=-\sign(Y_s^{x,v,t})ds,
	\end{array}\right.\quad s\in[t,1],\label{Eq of SDE covariance}
\end{align}
where $Y_t^{x,v,t}=x,V_t^{v,t}=v$ and $x,\kappa\in\mbR,|v|\leq t$ and $\beta<1/2$. Here, $\{B_s:s\in[0,1]\}$ is the standard Brownian motion.
\begin{pro}\label{Pro of existence and uniqueness of the weak solution}
	The SDE \eqref{Eq of SDE covariance} admits a unique weak solution.
\end{pro}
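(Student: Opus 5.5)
Since the drift $\kappa\sign(y)$ is bounded and measurable and the second component has no diffusion, I would reduce existence and uniqueness of a weak solution to the driftless, diffusion-only system by Girsanov's theorem, and then exploit the one-dimensional structure of that system.

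\textbf{Step 1: a priori bounds.} Any solution satisfies $|V_s^{v,t}-v|\le s-t$, so $|V_s^{v,t}|\le |v|+(s-t)\le s\le 1$ because $|v|\le t$. Hence
\begin{align*}
1-2|\beta|\le 1-2\beta\sign(y)V\le 1+2|\beta|
\end{align*}
along any solution path. For $|\beta|<1/2$ (the range in Theorem \ref{Thm of strategic CLT}; for $\beta\le -1/2$ I would restrict to that case or note that only $|\beta|<1/2$ is used) the diffusion coefficient $a(y,v):=(1-2\beta\sign(y)v)^{1/2}$ is bounded and uniformly elliptic on the relevant region $\{|v|\le 1\}$. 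I would also modify $a$ outside $|v|\le1$ so that it is globally bounded and elliptic; this changes nothing for solutions.

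\textbf{Step 2: the drift-free system via time change.} Consider $dY=a(Y,V)\,dB$, $dV=-\sign(Y)\,ds$. Set $M_s:=Y_s-x=\int_t^s a\,dB$, a continuous martingale with $d\langle M\rangle_s=a^2\,ds$ and $a^2\in[1-2|\beta|,1+2|\beta|]$. By Dambis--Dubins--Schwarz, $M_s=W_{A_s}$ with $W$ a Brownian motion and $A_s=\langle M\rangle_s$ strictly increasing. The pair $(Y,V)$ as a functional of $W$ is then characterized by the random ODE for the inverse clock $\gamma=A^{-1}$:
\begin{align*}
\frac{d\gamma_u}{du}=\bigl(1-2\beta\sign(x+W_u)\mathcal V_u\bigr)^{-1},\qquad \frac{d\mathcal V_u}{du}=-\sign(x+W_u)\frac{d\gamma_u}{du},
\end{align*}
where $\mathcal V_u=V_{\gamma_u}$. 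Given a Brownian path, this is an ODE in $(\gamma,\mathcal V)$ whose right-hand side is Lipschitz in $\mathcal V$ (the denominator is bounded below), and measurable in $u$ through $\sign(x+W_u)$. By Carathéodory's theorem it has a unique absolutely continuous solution. Existence follows by running this ODE on any Brownian path and undoing the time change, which gives a weak solution. Uniqueness in law follows because any solution produces, via Dambis--Dubins--Schwarz, a Brownian motion $W$ from which $(Y,V)$ is a measurable functional determined by the ODE, so the law is fixed. The set $\{u: x+W_u=0\}$ has Lebesgue measure zero almost surely, so the convention for $\sign(0)$ is irrelevant.

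\textbf{Step 3: adding the drift.} Because $\kappa\sign(y)/a(y,v)$ is bounded, Novikov's condition holds. Girsanov's theorem therefore turns weak solutions of the driftless system into weak solutions of \eqref{Eq of SDE covariance} and back, with an explicit density. This transfers both existence and uniqueness in law, in the same way as in the one-dimensional bandit SDE of \cite{mel1979strong}.

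\textbf{Main obstacle.} The hard part is Step 2. The system is degenerate, since $V$ has no noise, and its coefficients are discontinuous across $\{y=0\}$. As a result, the standard tools do not apply: Stroock--Varadhan needs a nondegenerate diffusion matrix, and Yamada--Watanabe needs Lipschitz coefficients. The time-change/ODE argument is how I would get around this. The delicate points are checking that the Carathéodory solution is well defined despite the measurable dependence on $\sign(x+W_u)$, and verifying that undoing the time change yields a genuine solution on $[t,1]$. The latter requires $A_1<\infty$ and a filtration adapted to the time-changed Brownian motion.
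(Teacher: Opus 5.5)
Your proposal is correct and follows essentially the same route as the paper. Both arguments use Girsanov to remove the bounded drift, Dambis--Dubins--Schwarz to write $Y=x+W_{\langle Y\rangle}$, and an ODE for $V$ in the new clock, Lipschitz in $V$, that determines $(Y,V)$ as a functional of $W$. Your version is somewhat tighter than the paper's in three places: you get uniqueness in law directly from that representation (the paper's appeal to pathwise uniqueness and Yamada--Watanabe is not quite what its argument shows, since $W$ is built from the solution rather than being the driving $B$); you use Carath\'eodory rather than Picard--Lindel\"of, which suits a right-hand side that is only measurable in time; and you state explicitly the bound $|V|\le 1$ and the restriction $|\beta|<1/2$ needed to keep $1-2\beta\sign(Y)V$ bounded away from zero.
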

\begin{proof}
	By constructing the Radon-Nikodym derivative process and Girsanov's Theorem, the original system admits \eqref{Eq of SDE covariance} a unique weak solution if and only if the following driftless system admits a unique weak solution:
	\begin{align*}
		\left\{\begin{array}{l}
			dY_s^{x,v,t}=(1-2\beta\sign(Y_s^{x,v,t})V_s^{v,t})^{1/2}dW_s,\\
			dV_s^{v,t}=-\sign(Y_s^{x,v,t})ds,
		\end{array}\right.
	\end{align*}
	where $\{W_s:s\in[0,1]\}$ is a standard Brownian motion defined on the probability space $\{\Omega,\mcF_s,\mbP:s\in[0,1]\}$. Next, let's show that for any path $\{W_s(\omega):s\in[t,1],\omega\in\Omega\}$, $\{Y_s^{x,v,t}(\omega):s\in[t,1],\omega\in\Omega\}$ is uniquely determined. Since the quadratic variation $\langle Y^{x,v,t}\rangle_s$ satisfies that
	\begin{align*}
		d\langle Y^{x,v,t}\rangle_s=ds-\beta d(V_s^{v,t})^2,\quad\implies\quad \langle Y^{x,v,t}\rangle_s=(s-t)+\beta\{(V_s^{v,t})^2-v^2\},
	\end{align*}
	by the Dambis-Dubins-Schwarz Theorem, any continuous martingale can be represented as a time-changed Brownian motion, then $Y_s^{x,v,t}$ satisfies that
	\begin{align*}
		Y_s^{x,v,t}=x+W_{\tau_s},
	\end{align*}
	where $\tau_s:=\langle Y^{x,v,t}\rangle_s$. Hence, the pathwise uniqueness of $Y_s^{x,v,t}(\omega)$ is equivalent to those of $V_s^{v,t}(\omega)$. Note that $\frac{d\tau_s(\omega)}{ds}>0$ by $1>2\beta|\sign(Y_s^{x,v,t})V_s^{v,t}|$ for all $s\in[t,1]$, so the inverse function $s=\rho(\tau_s(\omega))$ exists and $V_s^{v,t}(\omega)=V_{\rho(\tau_s(\omega))}^{v,t}(\omega)$. Consequently, we have
	\begin{align*}
		\frac{dV_{\rho(\tau_s(\omega))}^{v,t}(\omega)}{d\tau_s(\omega)}&=\frac{dV_s^{v,t}(\omega)}{ds}\frac{ds}{d\tau_s(\omega)}=\frac{\sign(Y_s^{x,v,t}(\omega))}{1-2\beta V_s^{v,t}(\omega)\sign(Y_s^{x,v,t}(\omega))}\\
		&=\frac{\sign(x+W_{\tau_s}(\omega))}{1-2\beta V_s^{v,t}(\omega)\sign(x+W_{\tau_s}(\omega))}:=h\big(V_{\rho(\tau_s(\omega))}^{v,t}(\omega),\tau_s(\omega)\big).
	\end{align*}
	Since the partial derivative of $h(v,\tau)$ with respect to $v$ satisfies that
	\begin{align*}
		\left|\frac{\partial h(v,\tau)}{\partial v}\right|=\frac{2}{(1-2\beta v\sign(x+W_{\tau}(\omega)))^2}<\infty,
	\end{align*}
	the function $h(v,\tau)$ is globally Lipschitz continuous with respect to $v$. By the Picard–Lindelöf Theorem, $V_{\rho(\tau_s(\omega))}^{v,t}(\omega)$ is uniquely determined once $W_{\tau_s}(\omega)$ is given, so does
	\begin{align*}
		s=\rho(\tau_s(\omega))=t+\tau_s(\omega)+\beta\{V_{\tau_s(\omega)}^{v,t}(\omega)^2-v^2\},
	\end{align*}
	which implies that $Y_s^{x,v,t}=x+W_{\tau_s}$ is pathwise uniquely determined under $\mbQ$, then the driftless system admits a unique weak solution by the Yamada-Watanabe Theorem.
\end{proof}

\begin{pro}\label{Pro of bounded derivatives}
	Given any $h(\cdot,\cdot)\in\mfC_b^3(\mbR^2)$, define $f(x,v,t):=\mbE[h(Y_1^{x,v,t},V_1^{v,t})]$, where $(Y_1^{x,v,t},V_1^{v,t})$ is the solution of \eqref{Eq of SDE covariance}, $\big|\frac{\partial^k f(x,v,t)}{\partial x^k}\big|$ and $\big|\frac{\partial^lf(x,v,t)}{\partial v^l}\big|$ are uniformly bounded in $t\in[0,1]$ for $k=1,2,3,4$ and $l=1,2$.
\end{pro}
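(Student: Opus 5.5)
Our plan is to follow the strategy of \cite[Lemma 5.2]{chen2022strategy} for the one-dimensional bandit SDE. We first reduce the regularity question to the driftless, time-changed representation used in the proof of Proposition \ref{Pro of existence and uniqueness of the weak solution}. We then differentiate the resulting pathwise functional in the initial data $(x,v)$.

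\textbf{Step 1 (Girsanov).} The first step removes the drift. Under the equivalent measure $\mbQ$ from the proof of Proposition \ref{Pro of existence and uniqueness of the weak solution}, we write
\begin{align*}
f(x,v,t)=\mbE_{\mbQ}\big[h(Y_1^{x,v,t},V_1^{v,t})\,\mathcal{E}_1\big],
\end{align*}
where $\mathcal{E}_1$ is the stochastic exponential of $\int_t^1\kappa\,\sign(Y_s)(1-2\beta\sign(Y_s)V_s)^{-1/2}dW_s$. Because $|v|\le t\le 1$ and $|V_s|\le 1$, the diffusion coefficient lies between $(1-2|\beta|)^{1/2}>0$ and $(1+2|\beta|)^{1/2}$. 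Hence all moments of $\mathcal{E}_1$ are bounded uniformly in $(x,v,t)$.

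\textbf{Step 2 (a PDE on each sign region).} The second step passes to a parabolic problem. Off the line $\{x=0\}$ the coefficients are smooth, so $f$ should solve the backward Kolmogorov equation
\begin{align*}
\partial_t f+\kappa\,\sign(x)\,\partial_x f-\sign(x)\,\partial_v f+\tfrac12\big(1-2\beta\sign(x)v\big)\partial_{xx}f=0,\qquad f(x,v,1)=h(x,v).
\end{align*}
This equation is uniformly parabolic in $x$ and transport in $v$. We would derive bounds on $x$-derivatives by a Duhamel/parametrix representation. On each half-line $\{\pm x>0\}$ we use the Gaussian kernel with the frozen volatility $1\mp2\beta v$. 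We glue the two half-lines through the transmission condition at $x=0$: $f$ and $\partial_x f$ are continuous there, and $\partial_{xx}f$ jumps by an amount determined by the equation.

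\textbf{Step 3 ($x$-derivatives up to fourth order).} The third step writes the $x$-derivatives through the stopping time $\tau_0$, the first hitting time of $0$. Before $\tau_0$ the pair $(Y,V)$ evolves with constant sign. Its law is then explicit: $V$ is linear in time, and $Y$ is a Gaussian martingale with deterministic quadratic variation. Consequently, derivatives of $f$ in $x$ transfer onto $h$ (bounded in $\mfC_b^3$) and onto the density of $\tau_0$. The latter is a Brownian first-passage density under a deterministic time change, and it is smooth in $x$ for $x\neq0$. By the strong Markov property, after $\tau_0$ the process restarts from $(0,V_{\tau_0},\tau_0)$, and this contribution does not depend on $x$ except through the law of $(\tau_0,V_{\tau_0})$.

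The fourth derivative needs one more order than $h$ provides. To obtain it, we would integrate by parts against the Gaussian kernel, which costs a factor $(1-s)^{-1/2}$. We would then check that this factor is integrable in $s$. We also need the one-sided limits as $x\to0^\pm$ to stay bounded, and we would verify this from the transmission condition.

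\textbf{Step 4 ($v$-derivatives).} For derivatives in $v$, we use the pathwise ODE from the proof of Proposition \ref{Pro of existence and uniqueness of the weak solution}. The map $v\mapsto V^{v,t}_{\rho(\tau)}$ solves a Lipschitz ODE whose derivative in $v$ is explicitly bounded by $\exp\!\big(\int 2(1-2|\beta|)^{-2}\big)$. Differentiating the clock relation $s=t+\tau_s+\beta(V_s^2-v^2)$ once and twice then gives bounded first and second derivatives of $Y_1,V_1$ with respect to $v$. These bounds hold between sign changes, and the sign changes are unaffected by small perturbations of $v$ on the time-changed scale.

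\textbf{Main obstacle.} We expect the hardest part to be the discontinuity of the coefficients at $x=0$. It threatens uniformity of $\partial_x^3 f$ and $\partial_x^4 f$ near the interface and for $t$ close to $1$. Our first attempt would be the explicit one-sided Gaussian computation in Step 3, mirroring \cite{mel1979strong}, combined with the transmission condition to control the jumps. If the fourth derivative turns out not to be bounded for general $h\in\mfC_b^3$, we would interpret it as a one-sided derivative that is bounded away from $x=0$.
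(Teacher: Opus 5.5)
Your proposal cannot be completed, and it breaks exactly where you placed the main obstacle. In Step 3 you plan to use the transmission condition to show that the one-sided limits of $\partial_x^3 f$ and $\partial_x^4 f$ at $x=0^\pm$ stay bounded uniformly in $t$. The transmission condition is precisely what rules this out. As you note in Step 2, $\partial_{xx}f(\cdot,v,t)$ jumps at $x=0$; the jump is forced by the discontinuous coefficients $\kappa\sign(x)$, $-\sign(x)$ and $1-2\beta\sign(x)v$ of the backward equation. So $\partial_x^3 f$ does not even exist at $x=0$. Worse, this jump stays of order one as $t\uparrow 1$, whereas $\partial_{xx}f(\cdot,v,1)=\partial_{xx}h$ is continuous. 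A layer of width $\sqrt{1-t}$ therefore forms, inside which the one-sided third derivative is of order $(1-t)^{-1/2}$.

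Concretely, take $\kappa=\beta=0$ and $h(x,v)=\psi(v)$, where $\psi$ is smooth and bounded with bounded derivatives and $\psi(v)=v$ on $[-1,1]$ (recall $|V_s|\le 1$). With $\Phi$ the standard normal distribution function, $f(x,v,t)=v-\int_0^{1-t}\big(2\Phi(x/\sqrt{r})-1\big)\,dr$, hence
\begin{align*}
\partial_{xx}f(x,v,t)=4\sign(x)\Big(1-\Phi\big(|x|/\sqrt{1-t}\big)\Big),\qquad \partial_x^3 f(0^\pm,v,t)=-\frac{4}{\sqrt{2\pi(1-t)}},
\end{align*}
and $\sup_{x>0}|\partial_x^4 f(x,v,t)|$ is of order $(1-t)^{-1}$. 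So for $k=3,4$ the statement fails even for smooth $h$, with no drift and no volatility modulation. No parametrix or hitting-time bookkeeping can prove it. What one can hope for is that $f$ is $C^1$ in $x$ with bounded, piecewise continuous $\partial_{xx}f$, and one-sided higher derivatives carrying negative powers of $1-t$ near the interface.

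The paper's own argument does not supply the missing idea either. It shears $z=x-\kappa v$ on each half-space and then applies the maximum principle to $\partial_z^k g$ and to $\partial_v g\mp M(1-t)$. It bounds these quantities by terminal data alone, silently dropping the lateral boundary $\{x=0\}$ from the parabolic boundary. In the example above it would even give $\partial_x f\equiv 0$ on $\{x\ge 0\}$, whereas $\partial_x f(0,v,t)=-4\sqrt{(1-t)/(2\pi)}$.

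Two further steps fail as written. First, for $k=4$ with $h\in\mfC_b^3$ only, consider the terminal-data term left after integrating by parts against the Gaussian kernel. Already in the pure heat case $\kappa=\beta=0$, $h=h(x)$, it equals $\int\partial_x^3 h(y)\,\partial_x\phi_{1-t}(x-y)\,dy$, with $\phi_u$ the centred Gaussian density of variance $u$. It carries the factor $(1-t)^{-1/2}$ with no time integral to absorb it. For merely continuous $\partial_x^3 h$ it is unbounded as $t\uparrow 1$: for $\partial_x^3h(y)=\sign(y)|y|^{1/2}$ near $0$ it is of order $(1-t)^{-1/4}$ at $x=0$. So your fallback ``bounded away from $x=0$'' does not restore uniformity in $t$; the paper's bound on $\partial_z^4 g$ likewise presupposes $\partial_x^4 h$.

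Second, in Step 4 the sign pattern is indeed $v$-independent on the intrinsic clock, but the intrinsic time $\tau_1$ corresponding to real time $1$ is not. For $x>0$, on the event that $Y$ does not reach $0$ before time $1$, one has $\tau_1=(1-t)(1-2\beta v)+\beta(1-t)^2$. Thus $Y_1=x+W_{\tau_1}$ is not pathwise differentiable in $v$ when $\beta\neq0$, and the bounded first and second $v$-derivatives of $Y_1$ you invoke do not exist. You would have to differentiate the law rather than the path (e.g.\ through the Gaussian density of $W_{\tau_1}$). That turns $v$-derivatives into $x$-derivatives and brings the interface problem back.
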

\begin{proof}
	By the Kolmogorov Backward equation, $f(x,v,t)$ satisfies that
	\begin{align}
		\frac{\partial f}{\partial t}+\alpha\sign(x)\frac{\partial f}{\partial x}+\sign(x)\frac{\partial f}{\partial v}+\frac{1-2\beta\sign(x)v}{2}\frac{\partial^2 f}{\partial x^2}=0,\label{Eq of KBE}
	\end{align}
	with smooth, bounded terminal condition $f(x,v,1) = h(x,v)$. The existence and uniqueness of the weak solution of \eqref{Eq of KBE} has been concluded in Proposition \ref{Pro of existence and uniqueness of the weak solution}. Without loss of generality, we first consider the case when $x\geq0$, the proofs for the case when $x<0$ are similar as those for previous case, thus omitted, then \eqref{Eq of KBE} becomes that 
	\begin{align*}
		\frac{\partial f}{\partial t}+\alpha\frac{\partial f}{\partial x}+\frac{\partial f}{\partial v}+\frac{1-2\beta v}{2}\frac{\partial^2 f}{\partial x^2}=0.
	\end{align*}
	Here, we construct $z:=x-\alpha v$ and $f(x,v,t)=g(z,v,t)$, then
	\begin{align*}
		\frac{\partial f}{\partial v}=\frac{\partial g}{\partial z}\frac{\partial z}{\partial v}+\frac{\partial g}{\partial v}=-\alpha\frac{\partial g}{\partial z}+\frac{\partial g}{\partial v},\quad\frac{\partial f}{\partial t}=\frac{\partial g}{\partial t},\quad\frac{\partial^k f}{\partial x^k}=\frac{\partial^k g}{\partial z^k},\quad k=1,2,3,4,
	\end{align*}
	so
	\begin{align*}
		\frac{\partial g}{\partial t}+\frac{\partial g}{\partial v}+\frac{1-2\beta v}{2}\frac{\partial^2 g}{\partial z^2}=0.
	\end{align*}
	Next, we define a (degenerate) parabolic operator $\mcL:=\frac{\partial }{\partial t}+\frac{\partial }{\partial v}+\frac{1-2\beta v}{2}\frac{\partial^2 }{\partial z^2}$, then $\mcL\big(\frac{\partial g}{\partial z}\big)=\frac{\partial}{\partial z}\mcL(g)=0$, by the maximum principle, we have
	\begin{align*}
		\sup_{\substack{t\in[0,1]\\z+\alpha v\geq 0,|v|\leq(2\beta)^{-1}}}\left|\frac{\partial g(z,v,t)}{\partial z}\right|\leq \sup_{z+\alpha v\geq 0,|v|\leq(2\beta)^{-1}}\left|\frac{\partial h(z+\alpha v,v)}{\partial z}\right|<\infty.
	\end{align*}
	Hence, when $x\geq0$, $\big|\frac{\partial f(x,v,t)}{\partial x}\big|$ is uniformly bounded in $t\in[0,1]$. Moreover, since $\mcL\big(\frac{\partial^2 g}{\partial z^2}\big)=0$ and $\frac{\partial^2 f}{\partial x^2}=\frac{\partial^2 g}{\partial z^2}$, we can conclude that $\big|\frac{\partial^2 f(x,v,t)}{\partial x^2}\big|$ is uniformly bounded in $t\in[0,1]$ when $x\geq 0$, so do $\big|\frac{\partial^3 f(x,v,t)}{\partial x^3}\big|$ and $\big|\frac{\partial^4 f(x,v,t)}{\partial x^4}\big|$. 
	
	For $\frac{\partial f}{\partial v}$, let's consider
	\begin{align*}
		\mcL\left(\frac{\partial g}{\partial v}\right)=\frac{\partial }{\partial t}\left(\frac{\partial g}{\partial v}\right)+\frac{\partial }{\partial v}\left(\frac{\partial g}{\partial v}\right)+\frac{1-2\beta v}{2}\frac{\partial^2 }{\partial z^2}\left(\frac{\partial g}{\partial v}\right)-\beta\frac{\partial^2 g}{\partial z^2}=0,
	\end{align*}
	i.e. $\mcL\big(\frac{\partial g}{\partial v}\big)=\beta\frac{\partial^2 g}{\partial z^2}$, then denote
	\begin{align*}
		\varphi_{\pm}(z,v,t):=\frac{\partial g(z,v,t)}{\partial v}\mp M(1-t),\quad \text{where }\ M:=\sup_{\substack{t\in[0,1]\\z+\alpha v\geq 0,|v|\leq(2\beta)^{-1}}}\left|\frac{\partial^2 g(z,v,t)}{\partial z^2}\right|,
	\end{align*}
	so $\mcL\varphi_+(z,v,t)\geq0$ and $\mcL\varphi_-(z,v,t)\leq0$. By the maximum principle again,
	\begin{align*}
		\sup_{\substack{t\in[0,1]\\z+\alpha v\geq 0,|v|\leq(2\beta)^{-1}}}\varphi_+(z,v,t)\leq\sup_{z+\alpha v\geq 0,|v|\leq(2\beta)^{-1}}\frac{\partial h(z+\alpha v,v)}{\partial v}<\infty,
	\end{align*}
	which implies that
	\begin{align*}
		\sup_{\substack{t\in[0,1]\\z+\alpha v\geq 0,|v|\leq(2\beta)^{-1}}}\frac{\partial g(z,v,t)}{\partial v}\leq M+\sup_{z+\alpha v\geq 0,|v|\leq(2\beta)^{-1}}\frac{\partial h(z+\alpha v,v)}{\partial v}<\infty.
	\end{align*}
	On the other hand, we have
	\begin{align*}
		\inf_{\substack{t\in[0,1]\\z+\alpha v\geq 0,|v|\leq(2\beta)^{-1}}}\varphi_-(z,v,t)\geq\inf_{z+\alpha v\geq 0,|v|\leq(2\beta)^{-1}}\frac{\partial h(z+\alpha v,v)}{\partial v}<\infty,
	\end{align*}
	i.e.
	\begin{align*}
		\inf_{\substack{t\in[0,1]\\z+\alpha v\geq 0,|v|\leq(2\beta)^{-1}}}\frac{\partial g(z,v,t)}{\partial v}\geq-M+\inf_{z+\alpha v\geq 0,|v|\leq(2\beta)^{-1}}\frac{\partial h(z+\alpha v,v)}{\partial v}<\infty.
	\end{align*}
	Thus, $\big|\frac{\partial g(z,v,t)}{\partial v}\big|$ is uniformly bounded in $t\in[0,1]$ when $z+\alpha v\geq0$ and $|v|\leq(2\beta)^{-1}$, combining with $\frac{\partial f}{\partial v}=-\alpha\frac{\partial g}{\partial z}+\frac{\partial g}{\partial v}$ and uniformly boundedness of $\big|\frac{\partial g(z,v,t)}{\partial z}\big|$, we can derive that $\big|\frac{\partial f(x,v,t)}{\partial v}\big|$ is uniformly bounded in $t\in[0,1]$ when $x\geq0$.
	
	Finally, note that
	\begin{align*}
		\frac{\partial^2 f}{\partial v^2}=\alpha^2\frac{\partial^2 g}{\partial z^2}-\alpha\frac{\partial^2 g}{\partial z\partial v}+\frac{\partial^2 g}{\partial v^2},
	\end{align*}
	where $\frac{\partial^2 g}{\partial z\partial v}$ satisfies that
	\begin{align*}
		\mcL\left(\frac{\partial^2 g}{\partial z\partial v}\right)=-\frac{\partial^3 g}{\partial z^3}.
	\end{align*}
	Since $\big|\frac{\partial^3 g(z,v,t)}{\partial z^3}\big|$ is uniformly bounded in $t\in[0,1]$ when $z+\alpha v\geq0$ and $|v|\leq(2\beta)^{-1}$, we can repeat the same arguments as those for $\frac{\partial g}{\partial v}$ to conclude that $\big|\frac{\partial^2 g}{\partial z\partial v}\big|$ is uniformly bounded in $t\in[0,1]$ when $z+\alpha v\geq0$ and $|v|\leq(2\beta)^{-1}$. Besides, $\frac{\partial^2 g}{\partial v^2}$ satisfies that
	\begin{align*}
		\mcL\left(\frac{\partial^2 g}{\partial v^2}\right)=-2\frac{\partial^3 g}{\partial z^2\partial v},
	\end{align*}
	and $\frac{\partial^3 g}{\partial z^2\partial v}$ satisfies that
	\begin{align*}
		\mcL\left(\frac{\partial^3 g}{\partial z^2\partial v}\right)=-\frac{\partial^4 g}{\partial z^4},
	\end{align*}
	then the uniformly boundedness of $\big|\frac{\partial^3 g}{\partial z^2\partial v}\big|$ can be concluded via those of $\big|\frac{\partial^4 g}{\partial z^4}\big|$, so does $\big|\frac{\partial^3 g}{\partial z^2\partial v}\big|$.
\end{proof}

\subsection{Proof of Theorem \texorpdfstring{\ref{Thm of strategic CLT}}{1}}\label{sec of proof norm test}
We first define a sequence auxiliary variables as below:
\begin{align}
	\left\{\mcS_t:=\frac{1}{T_2}\sum_{s=1}^t\theta_s:t=1,\cdots,T_2\right\},\quad\mcS_0=0,
\end{align}
where $\theta_t$ is defined in \eqref{Eq of control}. In this part, we simplify $\fM_{t,T_2}$ by $\fM_t$. To conclude $\mcM_{T_2}\overset{d}{\sim}\mcD\mcB(\kappa_{T_2},\beta_{T_2})$, it suffices to show that $(\fM_{T_2},\mcS_{T_2})\overset{d}{\sim}(Y_1^{0,0,0},V_1^{0,0})$, i.e. for any deterministic $a_1,a_2>0$
$$\lim_{n\to\infty}\big|\mbP\big(\big|\mcM_{T_2}\big|\leq a_1,|\mcS_{T_2}|\leq a_2\big)-\mbP\big(\big|Y_1^{0,0,0}\big|\leq a_1,|V_1^{0,0}|\leq a_2\big)\big|=0.$$
Actually, note that 
$$\mbP\big(\big|\mcM_{T_2}\big|\leq a_1,|\mcS_{T_2}|\leq a_2\big)=\mbE\big[1\big\{\big|\mcM_{T_2}\big|\leq a_1\big\}1\big\{\big|\mcS_{T_2}\big|\leq a_2\big\}\big],$$
where the function $1_{|y|\leq a}$ is not smooth. Here, let's approximate $1_{|y|\leq a}$ by a smooth function as follows. Let $\varphi(y)=\Phi((a-y)/\epsilon)-\Phi(-(a+y)/\epsilon)$, where $\epsilon>0$ and $\Phi(\cdot)$ is the CDF of the standard normal distribution, we have
\begin{align}
    \lim_{\epsilon\downarrow0}\sup_{y\in\mbR}|\varphi(y)-1_{|y|\leq a}|=0.\label{Eq of CDF approximation}
\end{align}
Therefore, to conclude Theorem \ref{Thm of strategic CLT}, it suffices to show that
\begin{align*}
	\lim_{T\to\infty}\big|\mbE[g_{a_1,a_2,\epsilon}(\fM_{T_2},\mcS_{T_2})]-\mbE[g_{a_1,a_2,\epsilon}(Y_1^{0,0,0},V_1^{0,0})]\big|=0,
\end{align*}
where $g_{a_1,a_2,\epsilon}(y,v):=\varphi_{a_1,\epsilon}(y)\varphi_{a_2,\epsilon}(v)$. Based on the Lindeberg’s principle, we have
\begin{align*}
	&\mbE[g_{a_1,a_2,\epsilon}(\fM_{T_2},\mcS_{T_2})]-\mbE[g_{a_1,a_2,\epsilon}(Y_1^{0,0,0},V_1^{0,0})]\\
	&=\sum_{t=1}^{T_2}\mbE\big[g_{a_1,a_2,\epsilon}\big(Y_1^{\fM_t,\mcS_t,t/T_2},V_1^{\mcS_t,t/T_2}\big)\big]-\mbE\big[g_{a_1,a_2,\epsilon}\big(Y_1^{\fM_{t-1},\mcS_{t-1},(t-1)/T_2},V_1^{\mcS_{t-1},(t-1)/T_2}\big)\big]\\
	&=\sum_{t=1}^{T_2}\underbrace{\mbE\big[g_{a_1,a_2,\epsilon}\big(Y_1^{\fM_t,\mcS_t,t/T_2},V_1^{\mcS_t,t/T_2}\big)-g_{a_1,a_2,\epsilon}\big(Y_1^{\fM_{t-1},\mcS_{t-1},t/T_2},V_1^{\mcS_{t-1},t/T_2}\big)\big]}_{\mbI_{1,t}:=}\\
	&+\sum_{t=1}^{T_2}\underbrace{\mbE\big[g_{a_1,a_2,\epsilon}\big(Y_1^{\fM_{t-1},\mcS_{t-1},t/T_2},V_1^{\mcS_{t-1},t/T_2}\big)-g_{a_1,a_2,\epsilon}\big(Y_1^{\fM_{t-1},\mcS_{t-1},(t-1)/T_2},V_1^{\mcS_{t-1},(t-1)/T_2}\big)\big]}_{\mbI_{2,t}:=}.
\end{align*}
In the following context, we will show that $|\mbI_{1,t}+\mbI_{2,t}|\leq\mro(T_2^{-1})$, which implies that
$$\big|\mbE[g_{a_1,a_2,\epsilon}(\fM_{T_2},\mcS_{T_2})]-\mbE[g_{a_1,a_2,\epsilon}(Y_1^{0,0,0},V_1^{0,0})]\big|\leq\mro(1),$$
then we conclude Theorem \ref{Thm of strategic CLT}.

\subsubsection{Computation of \texorpdfstring{$\mbI_{1,t}$}{I1}}
Before presenting the proofs, we list the following auxiliary lemma:
\begin{lem}[Cumulant expansion]\label{Lem of cumulant expansion}
	For any real-valued random variable $\xi$ with $\mbE[|\xi|^{K+2}]<\infty$ $(K\in\mbN^+)$ and real-valued function $g(\cdot)$ with continuous and bounded $K+1$ derivatives, then
	\begin{align*}
		\mbE[\xi g(\xi)]=\sum_{l=0}^K\frac{\kappa_{l+1}}{l!}\mbE[g^{(l)}(\xi)]+\epsilon_{K+1},
	\end{align*}
	where the remainder term $\epsilon_{K+1}$ satisfies that
	\begin{align*}
		|\epsilon_{K+1}|\leq C_K\sup_{x\in\mbR}\big|g^{(K+1)}(x)\big|\mbE[|\xi|^{K+2}],
	\end{align*}
	and $\kappa_l$ is the $l$-th cumulant of $\xi$ defined as follows:
	\begin{align*}
		\log\big(\mbE[e^{{\rm i}x\xi}]\big)=\sum_{l=1}^{\infty}\kappa_l\frac{({\rm i}x)^l}{l!},\quad x\in\mbR.
	\end{align*}
\end{lem}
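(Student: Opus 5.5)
The plan is to prove this cumulant expansion by induction on $K$, using the moment--cumulant recursion together with a Taylor expansion of $g$ around $0$. The statement is about a generic random variable $\xi$ and a test function $g$; nothing from the earlier parts of the paper is needed.

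\textbf{Step 1: exact identity for polynomials.} Expand $g(\xi)=\sum_{l=0}^{K}\frac{g^{(l)}(0)}{l!}\xi^l+R_K(\xi)$, where $|R_K(\xi)|\le \sup|g^{(K+1)}|\,|\xi|^{K+1}/(K+1)!$. For monomials $g(x)=x^l$ with $l\le K$, the standard moment--cumulant recursion
\begin{align*}
m_{l+1}=\sum_{j=0}^{l}\binom{l}{j}\kappa_{j+1}m_{l-j},\qquad m_r:=\mbE[\xi^r],
\end{align*}
obtained by differentiating $\phi(x)=\exp\{\sum\kappa_l({\rm i}x)^l/l!\}$, gives the identity exactly. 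Indeed,
\begin{align*}
\mbE[\xi\,\xi^l]=\sum_{j}\frac{\kappa_{j+1}}{j!}\,\mbE\Big[\tfrac{d^j}{dx^j}x^l\Big|_{x=\xi}\Big],
\end{align*}
and all terms with $j>l$ vanish. Hence the expansion holds with zero error for every polynomial of degree $\le K$.

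\textbf{Step 2: control the difference.} Apply the formula to $g-P_K$, where $P_K$ is the Taylor polynomial, so that $g-P_K=R_K$. We need to bound two things.
\begin{itemize}
\item The left side: $\mbE[\xi R_K(\xi)]\le C\sup|g^{(K+1)}|\,\mbE|\xi|^{K+2}$.
\item Each term $\frac{\kappa_{l+1}}{l!}\mbE[R_K^{(l)}(\xi)]$: here $|R_K^{(l)}(x)|\le C\sup|g^{(K+1)}|\,|x|^{K+1-l}$.
\end{itemize}
For the second bound, note that $|\kappa_{l+1}|\le C_l\,\mbE|\xi|^{l+1}$ by the cumulant--moment formula and Lyapunov's inequality, since each product of moments $\prod m_{r_i}$ with $\sum r_i=l+1$ is bounded by $\mbE|\xi|^{l+1}$. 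Then Hölder's inequality gives
\begin{align*}
\mbE|\xi|^{l+1}\,\mbE|\xi|^{K+1-l}\le \mbE|\xi|^{K+2}.
\end{align*}
Summing over $l\le K$ yields $|\epsilon_{K+1}|\le C_K\sup|g^{(K+1)}|\,\mbE|\xi|^{K+2}$, which is the claimed remainder bound.

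\textbf{Main obstacle.} The delicate point is the claim that each term has total moment order exactly $K+2$ so that Hölder closes the estimate. This requires tracking the product structure of the cumulants carefully, in particular the lower-order moment factors such as $m_1^{l+1}$. Lyapunov's inequality $(\mbE|\xi|^{r})^{1/r}$ nondecreasing in $r$ handles each such factor. If the Taylor expansion at $0$ proves awkward, I would instead expand around $\mbE\xi$, which only changes the constants $C_K$.
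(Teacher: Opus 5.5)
Your proof is correct. The paper states this lemma without proof and uses it as a standard tool; it is the usual cumulant expansion from random matrix theory (Khorunzhy--Khoruzhenko--Pastur, Lytova--Pastur), so there is no argument in the paper to compare against.

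Your route is the standard one:
\begin{itemize}
\item The identity holds exactly for the degree-$K$ Taylor polynomial $P_K$ of $g$ at $0$, by the recursion $m_{l+1}=\sum_{j\le l}\binom{l}{j}\kappa_{j+1}m_{l-j}$.
\item The error is carried entirely by $R_K=g-P_K$.
\item That error is controlled by $|R_K^{(l)}(x)|\le \sup|g^{(K+1)}|\,|x|^{K+1-l}/(K+1-l)!$, by $|\kappa_{l+1}|\le C_l\,\mathbb{E}|\xi|^{l+1}$, and by $\mathbb{E}|\xi|^{a}\,\mathbb{E}|\xi|^{b}\le \mathbb{E}|\xi|^{a+b}$ (Lyapunov).
\end{itemize}

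Three cosmetic points:
\begin{itemize}
\item No induction on $K$ is actually used. The argument is direct, so drop that framing.
\item Only $\mathbb{E}|\xi|^{K+2}<\infty$ is assumed, so the series defining the cumulants is merely formal. Define $\kappa_r={\rm i}^{-r}(\log\phi)^{(r)}(0)$ for $r\le K+2$ instead. Then derive the recursion from $\phi'=\phi\,(\log\phi)'$ and Leibniz's rule at $0$, rather than by differentiating $\exp\{\sum_l \kappa_l({\rm i}x)^l/l!\}$.
\item The ``main obstacle'' you flag is already handled by your bound on $|\kappa_{l+1}|$. The fallback of expanding around $\mathbb{E}\xi$ is therefore unnecessary.
\end{itemize}
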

For $\mbI_{1,t}$, since $h_{a_1,a_2,\epsilon}(x,v,t):=\mbE[g_{a_1,a_2,\epsilon}(Y_1^{x,v,t},V_1^{x,v})]$ is a smooth function of $(x,v)$, by the second-order Taylor's expansion with Lagrange remainder, we have
\begin{align}
	\mbI_{1,t}&=\mbE\big[h\big(\fM_t,\mcS_t,t/T_2\big)-h\big(\fM_{t-1},\mcS_{t-1},t/T_2\big)\big]\notag\\
	&=\frac{1}{\sqrt{T_2(\tau_1^2+\sigma_1^2)}}\mbE\big[\theta_t X_t\partial_x h\big(\fM_{t-1},\mcS_{t-1},t/T_2\big)\big]+\frac{1}{T_2}\mbE\big[\theta_t\partial_v h\big(\fM_{t-1},\mcS_{t-1},t/T_2\big)\big]\label{Eq of major 1}\\
	&+\frac{1}{2T_2(\tau_1^2+\sigma_1^2)}\mbE\big[X_t^2\partial_{xx}h\big(\fM_{t-1},\mcS_{t-1},t/T_2\big)\big]\label{Eq of major 3}\\
	&+\text{remainders}.\label{Eq of Major 2}
\end{align}
where we simplify $h_{a_1,a_2,\epsilon}$ by $h$.

\medspace

\noindent
{\bf Computation of \eqref{Eq of major 1}}: Recall the definition of $X_t$ in \eqref{Eq of estimators}, 
\begin{small}
\begin{align*}
	&\frac{\mbE[\partial_x h(\fM_{t-1},\mcS_{t-1},t/T_2)\theta_tX_t]}{\sqrt{T_2(\tau_1^2+\sigma_1^2)}}=\frac{\mbE[\partial_x h(\fM_{t-1},\mcS_{t-1},t/T_2)\theta_t\mbE[X_t|\mcF_{t-1}]]}{\sqrt{T_2(\tau_1^2+\sigma_1^2)}}\\
	&=\frac{1}{T_2}\Bigg(-\alpha_{T_2}\mbE[\partial_x h(\fM_{t-1},\mcS_{t-1},t/T_2)\theta_t]+\sqrt{\frac{T_2}{T_1^2(\tau_1^2+\sigma_1^2)}}\sum_{s=1}^{T_1}\mbE[\partial_x h(\fM_{t-1},\mcS_{t-1},t/T_2)\theta_t\bbmu'\bbGa\bby_s]\Bigg),
\end{align*}
\end{small}\noindent
by the cumulant expansion, we have
\begin{align*}
	&\sqrt{\frac{T_2}{T_1^2(\tau_1^2+\sigma_1^2)}}\sum_{s=1}^{T_1}\mbE[\partial_x h(\fM_{t-1},\mcS_{t-1},t/T_2)\theta_t\bbmu'\bbGa\bby_s]\\
	&=\sqrt{\frac{T_2}{T_1^2(\tau_1^2+\sigma_1^2)}}\sum_{s=1}^{T_1}\sum_{j=1}^m\mbE[\theta_t\bbmu'\Gamma_{\cdot j}\partial_{y_{j,s}}\{\partial_x h(\fM_{t-1},\mcS_{t-1},t/T_2)\}]+\epsilon_{(2)},
\end{align*}
where $\Gamma_j$ is the $j$-th column of $\bbGa$ and $\epsilon_{(2)}$ is defined as in Lemma \ref{Lem of cumulant expansion} and
\begin{align*}
	&\partial_{y_{j,s}}\{\partial_x h(\fM_{t-1},\mcS_{t-1},t/T_2)\}\\
	&=\partial_{xx} h(\fM_{t-1},\mcS_{t-1},t/T_2)\times\frac{1}{T_1\sqrt{T_2(\tau_1^2+\sigma_1^2)}}\sum_{r=1}^{t-1}\theta_r(\bbmu+\bbGa\bby_{r+T_1})'\Gamma_{\cdot j},
\end{align*}
so we obtain that
\begin{align*}
	&\sqrt{\frac{T_2}{T_1^2(\tau_1^2+\sigma_1^2)}}\sum_{s=1}^{T_1}\mbE[\partial_x h(\fM_{t-1},\mcS_{t-1},t/T_2)\theta_t\bbmu'\bbGa\bby_s]\\
	&=\underbrace{\frac{T_2\bbmu'\bbSig\bbmu}{T_1(\tau_1^2+\sigma_1^2)}}_{=\beta_{T_2}}\mbE[\mcS_{t-1}\theta_t\partial_{xx} h(\fM_{t-1},\mcS_{t-1},t/T_2)]\\
	&+\frac{1}{T_1(\tau_1^2+\sigma_1^2)}\sum_{r=1}^{t-1}\mbE[\theta_t\theta_r\bbmu'\bbSig\bbGa\bby_{r+T_1}\partial_{xx} h(\fM_{t-1},\mcS_{t-1},t/T_2)]+\epsilon_{(2)}.
\end{align*}
Since 
\begin{align*}
	&\mbE\left[\left(\frac{1}{T_1}\sum_{r=1}^{t-1}\theta_r\bbmu'\bbSig\bbGa\bby_{r+T_1}\right)^2\right]=\frac{1}{T_1^2}\sum_{r_1,r_2=1}^{t-1}\mbE[\theta_{r_1}\theta_{r_2}(\bbmu'\bbSig\bbGa\bby_{r_1+T_1})(\bbmu'\bbSig\bbGa\bby_{r_2+T_1})]\\
	&=\frac{1}{T_1^2}\sum_{r=1}^{t-1}\mbE[(\bbmu'\bbSig\bbGa\bby_{r+T_1})^2]+\frac{2}{T_1^2}\sum_{1\leq r_1<r_2\leq t-1}\mbE[\theta_{r_1}\theta_{r_2}(\bbmu'\bbSig\bbGa\bby_{r_1+T_1})\mbE[\bbmu'\bbSig\bbGa\bby_{r_2+T_1}|\theta_{r_2}]]\\
	&=\frac{1}{T_1^2}\sum_{r=1}^{t-1}\mbE[(\bbmu'\bbSig\bbGa\bby_{r+T_1})^2]=\frac{(t-1)\bbmu'\bbSig^3\bbmu}{T_1^2}\leq\mrO(T_2^{-1}),
\end{align*}
combining with the fact that $|\partial_{xx} h(x,v,t)|$ is bounded uniformly in $t\in[0,1]$ by Proposition \ref{Pro of bounded derivatives}, we can derive that
\begin{align*}
	\left|\frac{1}{T_1^2(\tau_1^2+\sigma_1^2)}\sum_{r=1}^{t-1}\sum_{s=1}^{T_1}\mbE[\theta_r\theta_t\bbmu'\bbSig\bbGa\bby_{r+T_1}\partial_{xx} h(\fM_{t-1},t/T_2)]\right|=\mro(1).
\end{align*}
Next, we show that $|\epsilon_{(2)}|=\mro(1)$. By Lemma \ref{Lem of cumulant expansion}, we have
\begin{align*}
	\epsilon_{(2)}=\sqrt{\frac{T_2}{T_1^2(\tau_1^2+\sigma_1^2)}}\sum_{s=1}^{T_1}\sum_{j=1}^m\mbE[\theta_t\bbmu'\Gamma_{\cdot j}\partial_{y_{j,s}}^2\{\partial_x h(\fM_{t-1},\mcS_{t-1},t/T_2)\}],
\end{align*}
where
\begin{align*}
	&\partial_{y_{j,s}}^2\{\partial_x h(\fM_{t-1},\mcS_{t-1},t/T_2)\}\\
	&=\partial_{xxx} h(\fM_{t-1},\mcS_{t-1},t/T_2)\times\left(\frac{1}{T_1\sqrt{T_2(\tau_1^2+\sigma_1^2)}}\sum_{r=1}^{t-1}\theta_r(\bbmu+\bbGa\bby_{r+T_1})'\Gamma_{\cdot j}\right)^2.
\end{align*}
so
\begin{align*}
	&\epsilon_{(2)}=\frac{T_2^{3/2}\bbmu'\bbGa\diag(\bbGa'\bbmu)\bbGa'\bbmu}{T_1^2(\tau_1^2+\sigma_1^2)^{3/2}}\mbE[\theta_t\mcS_{t-1}^2\partial_{xxx} h(\fM_{t-1},\mcS_{t-1},t/T_2)]\\
	&+\frac{2T_2^{1/2}}{T_1^2(\tau_1^2+\sigma_1^2)^{3/2}}\sum_{r=1}^{t-1}\mbE[\theta_t\mcS_{t-1}\theta_r\bbmu'\bbGa\diag(\bbGa'\bbmu)\bbGa\bby_{r+T_1}\partial_{xxx} h(\fM_{t-1},\mcS_{t-1},t/T_2)]\\
	&+\frac{1}{T_1^2T_2^{1/2}(\tau_1^2+\sigma_1^2)^{3/2}}\sum_{r_1,r_2=1}^{t-1}\mbE[\theta_t\theta_{r_1}\theta_{r_2}\bby_{r_2+T_1}'\bbGa'\diag(\bbGa'\bbmu)\bbGa\bby_{r_1+T_1}\partial_{xxx} h(\fM_{t-1},\mcS_{t-1},t/T_2)].
\end{align*}
Based on Assumption \ref{Ap of noise}, Proposition \ref{Pro of bounded derivatives} and the fact that $|S_{t-1}|\leq1$, we can conclude that
\begin{align*}
	&\left|\frac{T_2^{3/2}\bbmu'\bbGa\diag(\bbGa'\bbmu)\bbGa'\bbmu}{T_1^2(\tau_1^2+\sigma_1^2)^{3/2}}\mbE[\theta_t\mcS_{t-1}^2\partial_{xxx} h(\fM_{t-1},\mcS_{t-1},t/T_2)]\right|\leq\mro(1),
\end{align*}
similarly, we have
\begin{align*}
	\left|\frac{2T_2^{1/2}}{T_1^2(\tau_1^2+\sigma_1^2)^{3/2}}\sum_{r=1}^{t-1}\mbE[\theta_t\mcS_{t-1}\theta_r\bbmu'\bbGa\diag(\bbGa'\bbmu)\bbGa\bby_{r+T_1}\partial_{xxx} h(\fM_{t-1},\mcS_{t-1},t/T_2)]\right|\leq\mro(1).
\end{align*}
Finally, note that for $1\leq r_1<r_2\leq t-1$ and $1\leq r_3<r_4\leq t-1$, if $\{r_1,r_2\}\neq\{r_3,r_4\}$, we have
\begin{align*}
	\mbE\big[\theta_{r_1}\theta_{r_2}\theta_{r_3}\theta_{r_4}(\bby_{r_2+T_1}'\bbGa'\diag(\bbGa'\bbmu)\bbGa\bby_{r_1+T_1})(\bby_{r_4+T_1}'\bbGa'\diag(\bbGa'\bbmu)\bbGa\bby_{r_3+T_1})\big]=0,
\end{align*}
then
\begin{align*}
	&\mbE\left[\left(\frac{1}{T_1^2}\sum_{1\leq r_1<r_2\leq t-1}\theta_{r_1}\theta_{r_2}\bby_{r_2+T_1}'\bbGa'\diag(\bbGa'\bbmu)\bbGa\bby_{r_1+T_1}\right)^2\right]\\
	&=\frac{1}{T_1^4}\sum_{1\leq r_1<r_2\leq t-1}\mbE\big[\big(\bby_{r_2+T_1}'\bbGa'\diag(\bbGa'\bbmu)\bbGa\bby_{r_1+T_1}\big)^2\big]\\
	&=\frac{(t-1)(t-2)}{2T_1^4}\bbmu'\bbGa(\bbSig\odot\bbSig)\bbGa\bbmu'\leq\mro(1).
\end{align*}
Moreover, since
\begin{align*}
	&T_1^{-2}\Var\big(\bby_{r+T_1}'\bbGa'\diag(\bbGa'\bbmu)\bbGa\bby_{r+T_1}\big)\leq\mrO\big(T_1^{-2}(\bbmu'\diag(\bbSig)\boldsymbol{1}_n)^2\big)\leq\mro(1),
\end{align*}
it gives that
\begin{align*}
	\left|\frac{1}{T_1^2T_2^{1/2}(\tau_1^2+\sigma_1^2)^{3/2}}\sum_{r_1,r_2=1}^{t-1}\mbE[\theta_t\theta_{r_1}\theta_{r_2}\bby_{r_2+T_1}'\bbGa'\diag(\bbGa'\bbmu)\bbGa\bby_{r_1+T_1}\partial_{xxx} h(\fM_{t-1},\mcS_{t-1},t/T_2)]\right|\leq\mro(1),
\end{align*}
which implies that $|\epsilon_{(2)}|\leq\mro(1)$. In summary, we obtain that
\begin{align*}
	&\eqref{Eq of major 1}=\frac{1}{T_2}\mbE\big[\theta_t\partial_v h\big(\fM_{t-1},\mcS_{t-1},t/T_2\big)\big]\\
	&+\frac{1}{T_2}\big(-\alpha_{T_2}\mbE[\partial_x h(\fM_{t-1},\mcS_{t-1},t/T_2)\theta_t]+\beta_{T_2}\mbE[\mcS_{t-1}\theta_t\partial_{xx} h(\fM_{t-1},\mcS_{t-1},t/T_2)]\big)+\mro(T_2^{-1}).
\end{align*}
{\bf Computation of \eqref{Eq of major 3}}: Note that
\begin{align*}
	\mbE[X_t^2|\mcF_{t-1}]=\frac{1}{T_1^2}\sum_{s_1,s_2=1}^{T_1}\mbE[\bbx_{s_1}'\bbx_{t+T_1}\bbx_{t+T_1}'\bbx_{s_2}|\mcF_{t-1}]-\frac{2d_0^2}{T_1}\sum_{s=1}^{T_1}\bbx_s'\bbmu+d_0^4,
\end{align*}
where
\begin{align*}
	&\mbE[\bbx_{s_1}'\bbx_{t+T_1}\bbx_{t+T_1}'\bbx_{s_2}|\mcF_{t-1}]=\bbx_{s_1}'\bbmu\bbmu'\bbx_{s_2}+\mbE[\bbx_{s_1}'\bbGa\bby_{t+T_1}\bby_{t+T_1}'\bbGa'\bbx_{s_2}|\mcF_{t-1}]\\
	&=\bbx_{s_1}'\bbmu\bbmu'\bbx_{s_2}+\bbx_{s_1}'\bbSig\bbx_{s_2}=\Vert\bbmu\Vert_2^4+\Vert\bbmu\Vert_2^2\bbmu'\bbGa(\bby_{s_1}+\bby_{s_2})+\bby_{s_1}'\bbmu\bbmu'\bby_{s_2}\\
	&+\bbmu'\bbSig\bbmu+\bbmu'\bbSig\bbGa(\bby_{s_1}+\bby_{s_2})+\bby_{s_1}'\bbGa'\bbSig\bbGa\bby_{s_2}.
\end{align*}
Thus, we obtain that
\begin{align*}
	&\mbE[X_t^2|\mcF_{t-1}]=\tau_1^2+\bbmu'\bbSig\bbmu+2(\tau_1\bbmu'+\bbmu'\bbSig)\bbGa\fy_{T_1}+\fy_{T_1}'(\bbmu\bbmu'+\bbGa'\bbSig\bbGa)\fy_{T_1},
\end{align*}
where $\fy_{T_1}=T_1^{-1}\sum_{s=1}^{T_1}\bby_s$, and
\begin{align*}
	&\eqref{Eq of major 3}=\frac{1}{2T_2(\tau_1^2+\sigma_1^2)}\mbE\big[\mbE[X_t^2|\mcF_{t-1}]\partial_{xx}h(\fM_{t-1},\mcS_{t-1},t/T_2)\big]\\
	&=\frac{\tau^2+\bbmu'\bbSig\bbmu}{2T_2(\tau_1^2+\sigma_1^2)}\mbE[\partial_{xx}h(\fM_{t-1},\mcS_{t-1},t/T_2)]\\
	&+\frac{1}{2T_2(\tau_1^2+\sigma_1^2)}\mbE[\fy_{T_1}'(\bbmu\bbmu'+\bbGa'\bbSig\bbGa)\fy_{T_1}\partial_{xx}h(\fM_{t-1},\mcS_{t-1},t/T_2)]\\
	&+\frac{1}{T_2(\tau_1^2+\sigma_1^2)}\mbE[(\tau_1\bbmu'+\bbmu'\bbSig)\bbGa\fy_{T_1}\partial_{xx}h(\fM_{t-1},\mcS_{t-1},t/T_2)].
\end{align*}
Since
\begin{align*}
	\mbE[\fy_{T_1}'(\bbmu\bbmu'+\bbGa'\bbSig\bbGa)\fy_{T_1}]=\frac{\|\bbmu\|_2^2}{T_1}+\frac{1}{T_1}\tr(\bbSig^2),
\end{align*}
by \eqref{Eq of Variance trick} and Assumption \ref{Ap of noise}, we know that
\begin{align*}
	&\Var\big(\fy_{T_1}'(\bbmu\bbmu'+\bbGa'\bbSig\bbGa)\fy_{T_1}\big),\Var((\tau_1\bbmu'+\bbmu'\bbSig)\bbGa\fy_{T_1})\leq\mrO(T^{-1}).
\end{align*}
Combining with the fact that $|\partial_{xx}h(x,v,t)|$ is uniformly bounded in $t\in[0,1]$ by Proposition \ref{Pro of bounded derivatives}, we derive that
\begin{align*}
	&\eqref{Eq of major 3}=\frac{1}{2T_2}\mbE[\partial_{xx}h(\fM_{t-1},\mcS_{t-1},t/T_2)]+\mro(T_2^{-1}).
\end{align*}
{\bf Computation of \eqref{Eq of Major 2}}: The explicit form of remainders are given as below:
\begin{align*}
	&\text{remainders}=\frac{1}{2T_2^2}\mbE[\partial_{vv}h(\xi_{1,t},\xi_{2,t},t/T_2)]+\frac{1}{T_2^{3/2}(\tau_1^2+\sigma_1^2)^{1/2}}\mbE[X_t\partial_{xv}h(\xi_{1,t},\xi_{2,t},t/T_2)]\\
	&+\frac{1}{2T_2(\tau_1^2+\sigma_1^2)}\mbE[X_t^2\{\partial_{xx}h(\xi_{1,t},\xi_{2,t},t/T_2)-\partial_{xx}h(\fM_{t-1},\mcS_{t-1},t/T_2)\}],
\end{align*}
where $\xi_{1,t}\in[\min\{\fM_{t-1},\fM_t\},\max\{\fM_{t-1},\fM_t\}]$ and $\xi_{2,t}\in[\min\{\mcS_{t-1},\mcS_t\},\max\{\mcS_{t-1},\mcS_t\}]$. By Proposition \ref{Pro of bounded derivatives}, we have $\mbE[|\partial_{vv}h(\xi_{1,t},\xi_{2,t},t/T_2)|]\leq\mrO(1)$, combining with $\mbE[X_t],\Var(X_t)\leq\mrO(1)$, we have
\begin{align*}
	&\text{remainders}=\frac{1}{2T_2(\tau_1^2+\sigma_1^2)}\mbE[X_t^2\{\partial_{xx}h(\xi_{1,t},\xi_{2,t},t/T_2)-\partial_{xx}h(\fM_{t-1},\mcS_{t-1},t/T_2)\}]+\mro(T_2^{-1}).
\end{align*}
Moreover, by Proposition \ref{Pro of bounded derivatives}, we know that $\partial_{xx}h$ is uniformly Lipschitz in $t\in[0,1]$, then
\begin{align*}
	&\big|\partial_{xx}h(\xi_{1,t},\xi_{2,t},t/T_2)-\partial_{xx}h(\fM_{t-1},\mcS_{t-1},t/T_2)\big|\leq\frac{L_1\eta1_{|X_t|\leq\eta}}{\sqrt{T_2(\tau_1^2+\sigma_1^2)}}+L_21_{|X_t|>\eta}+\frac{L_3}{T_2},
\end{align*}
where $L_1,L_2,L_3>0$ are three constants, $\eta=T_2^{1/6}$ and we use the fact that $|\xi_{2,t}-\mcS_{t-1}|\leq T_2^{-1}$ above. According to Lemma \ref{Lem of Consistence}, we have $\mbE[X_t^4]\leq\mrO(1)$, then by Chebyshev inequality,
\begin{align*}
	&T_2^{-1}\mbE[X_t^21_{|X_t|>\eta}]\leq T_2^{-1}\mbE[X_t^4]^{1/2}\mbP(|X_t|>\eta)\leq\mrO(T_2^{-1}\eta^{-2}\mbE[X_t^2])\leq\mrO(T_2^{-1}\eta^{-2})\leq\mrO(T_2^{-4/3}).
\end{align*}
Combining with
$$\eta T_2^{-3/2}\mbE[X_t^21_{|X_t|\leq\eta}]\leq\mrO(\eta T_2^{-1}\mbE[X_t^2])\leq\mrO(T_2^{-4/3}),$$
we conclude that
\begin{align*}
	&\text{remainders}=\mro(T_2^{-1}).
\end{align*}
In summary, we obtain that
\begin{small}
\begin{align*}
	&\mbI_{1,t}=\frac{1}{T_2}\big(\alpha_{T_2}\mbE[\sign(\fM_{t-1})\partial_x h(\fM_{t-1},\mcS_{t-1},t/T_2)]-\beta_{T_2}\mbE[\mcS_{t-1}\sign(\fM_{t-1})\partial_{xx} h(\fM_{t-1},\mcS_{t-1},t/T_2)]\big)\\
	&-\frac{1}{T_2}\mbE\big[\sign(\fM_{t-1})\partial_v h\big(\fM_{t-1},\mcS_{t-1},t/T_2\big)\big]+\frac{1}{2T_2}\mbE[\partial_{xx}h(\fM_{t-1},\mcS_{t-1},t/T_2)]+\mro(T_2^{-1}).
\end{align*}
\end{small}

\subsubsection{Computation of \texorpdfstring{$\mbI_{2,t}$}{I2}}
By the Markov property of $(Y_1^{x,v,t},V_1^{v,t})$ in \eqref{Eq of SDE covariance} and Ito's formula, we have
\begin{small}
	\begin{align*}
		&h(x,v,(t-1)/T_2)-h(x,v,t/T_2)\\
		&=\mbE\big[h\big(Y_{t/T_2}^{x,v,(t-1)/T_2},V_{t/T_2}^{v,(t-1)/T_2},t/T_2\big)\big]-h(x,v,t/T_2)\\
		&=\mbE\left[\int_{(t-1)/T_2}^{t/T_2}\alpha\sign\big(Y_s^{x,v,(t-1)/T_2}\big)\partial_x h\big(Y_s^{x,v,(t-1)/T_2},V_s^{v,(t-1)/T_2},t/T_2\big)ds\right]\\
		&-\mbE\left[\int_{(t-1)/T_2}^{t/T_2}\sign\big(Y_s^{x,v,(t-1)/T_2}\big)\partial_v h\big(Y_s^{x,v,(t-1)/T_2},V_s^{v,(t-1)/T_2},t/T_2\big)ds\right]\\
		&+\frac{1}{2}\mbE\left[\int_{(t-1)/T_2}^{t/T_2}\big(1-2\beta\sign\big(Y_s^{x,v,(t-1)/T_2}\big)V_s^{v,(t-1)/T_2}\big)\partial_{xx} h\big(Y_s^{x,v,(t-1)/T_2},V_s^{v,(t-1)/T_2},t/T_2\big)ds\right].
	\end{align*}
\end{small}\noindent
By the same arguments as those in \cite[Lemma 5.2]{chen2022strategy}, we can show that
\begin{align*}
	&h(x,v,(t-1)/T_2)-h(x,v,t/T_2)\\
	&=\frac{\alpha}{T_2}\sign(x)\partial_x h(x,v,(t-1)/T_2)-\frac{1}{T_2}\sign(x)\partial_v h(x,v,(t-1)/T_2)\\
	&+\frac{(1-2\beta\sign(x)v)}{2}\partial_{xx} h(x,v,(t-1)/T_2)+\mro(T_2^{-1}),
\end{align*}
which implies that $|\mbI_{1,t}+\mbI_{2,t}|=\mro(T_2^{-1})$.
\newpage

\end{document}